\documentclass[11pt]{article}
\usepackage{geometry}
\geometry{a4paper}
\usepackage{amsmath, amssymb, amsthm}
\usepackage{graphicx}
\usepackage{bm}
\usepackage[linesnumbered,ruled]{algorithm2e}
\usepackage{fullpage}
\usepackage{mathtools}
\usepackage{url}
\newtheorem{theorem}{Theorem}
\newtheorem{lemma}{Lemma}

\newtheorem{claim}{Claim}
\newtheorem{definition}{Definition}

\begin{document}

\title{The Strongly Stable Roommates Problem and Linear Programming}

\author{Naoyuki Kamiyama\thanks{%
This work was supported by JST ERATO Grant Number JPMJER2301, Japan.}}
\date{Institute of Mathematics for Industry \\ 
Kyushu University \\ 
Fukuoka, Japan \\
\url{kamiyama@imi.kyushu-u.ac.jp}}

\maketitle

\begin{abstract}
The stable roommates problem is a non-bipartite version 
of the stable matching problem in a bipartite graph. 
In this paper, we consider the stable roommates problem
with ties. 
In particular, we focus on 
strong stability, which is one of the main 
stability 
concepts in the stable roommates problem
with ties. 
We propose a new polynomial-time algorithm for the 
problem 
of checking the existence of a strongly stable matching 
in the stable roommates problem with ties.
More concretely, 
we extend 
the linear programming approach of Abeledo and Blum
to the stable roommates problem with strict preferences
to our problem. 
\end{abstract} 


\section{Introduction} 

The topic of this paper is 
the stable roommates problem, which 
is a non-bipartite version 
of the well-known stable matching problem in a bipartite 
graph~\cite{GaleS62}. 
In the stable matching problem, there always exists a stable matching~\cite{GaleS62}. 
By contrast, it is known that 
a stable matching may not exist in the stable roommates problem~\cite{GaleS62}. 
Thus, the problem of checking the existence of 
a stable matching is one of the most important problems in the study of the stable
roommates problem.
For this problem, 
Irving~\cite{Irving85} proposed a 
polynomial-time algorithm.
This algorithm is combinatorial, i.e., 
this does not need solving linear programs.

In addition to combinatorial algorithms, 
linear programming approaches have been actively 
studied in the study of the stable matching problem
(see, e.g., \cite{Fleiner03,Rothblum92,VandeVate89}). 
Thus, it is natural to investigate a linear programming 
approach to the stable roommates problem. 
In this direction, 
Abeledo and Blum~\cite{AbeledoB96}
proved that 
we can check the existence of a stable matching in the stable roommates 
problem in polynomial time
by solving a linear program several times 
(see \cite{AbeledoR94} for related topics). 
Furthermore, 
Teo and Sethuraman~\cite{TeoS98}
proved that, 
for each instance of the stable roommates problem
in a complete graph, 
there exists a linear inequality system
such that there exists a feasible solution to 
this system if and only if 
there exists a stable matching in the given instance
(see \cite{SethuramanT01,TeoS00} for related work of 
\cite{TeoS98}).

In this paper, we consider the stable roommates problem
with ties, which is  
a variant of the stable roommates problem where 
agents may be indifferent between potential partners. 
In the study of the stable matching problem, 
the stable matching problem with ties 
has been actively studied 
(see, e.g., 
\cite{ChenG10,Irving94,IrvingMS00,IrvingMS03,Kamiyama22,
Kamiyama25+,KavithaMMP07,Kunysz19,Manlove99}). 
In the setting where ties are allowed in the preferences, 
weak stability, super-stability, and strong stability 
have been mainly considered. 
Weak stability 
guarantees that 
there does not exist an unmatched pair of agents such that 
both agents prefer the other agent in the pair to the current partner.
Super-stability 
guarantees
that there does not exist an unmatched pair of agents  
such that 
both agents weakly prefer the other agent in the pair to the current 
partner (i.e., each agent in the pair  
prefers the other agent in the pair to the current 
partner, or is indifferent between them). 
Strong stability 
guarantees that 
there does not exist an unmatched pair of agents such that
(i) 
both agents weakly prefer the other agent in the pair to the current 
partner, and 
(ii) 
at least one of the agents 
prefers the other agent in the pair to the current partner.
Ronn~\cite{Ronn90}
proved that 
the problem of checking the existence of a weakly stable
matching in the stable roommates problem with ties is NP-complete
(see 
also \cite{IrvingM02}). 
By contrast, 
the problem of checking the existence of 
a super-stable matching in the stable roommates problem with ties 
can be solved in 
polynomial time~\cite{FleinerIM07,FleinerIM11,IrvingM02}. 
Kamiyama~\cite{Kamiyama25}
extended the approach of Teo and Sethuraman~\cite{TeoS98} to 
super-stability in the stable roommates problem with ties. 

The topic of this paper is 
a strongly stable matching in the stable roommates problem
with ties.
Scott~\cite{Scott05} proposed a combinatorial polynomial-time 
algorithm
for the problem 
of checking the existence of a strongly stable matching 
in the stable roommates problem with ties. 
(Notice that the algorithm of \cite{Scott05} contains some flaws 
that can be removed
by using results in \cite{Kunysz16}.)
In addition, 
Kunysz~\cite{Kunysz16}
proposed a faster 
combinatorial polynomial-time 
algorithm for the same problem. 
This algorithm relies on the characterization of strongly
stable matchings in the stable roommates problem with ties 
in \cite{KunyszPG16}. 

The aim of this paper is to propose another
approach to the problem 
of checking the existence of a strongly stable matching 
in the stable roommates problem with ties.
More concretely, 
we propose a linear programming approach to 
the problem 
by extending the approach of Abeledo and Blum~\cite{AbeledoB96} to 
the stable roommates problem with strict preferences. 
Namely, we prove that 
we can solve
the problem of checking the existence of a strongly stable matching 
in the stable roommates problem with ties 
in polynomial time by solving a linear program 
several times.
Our extension is non-trivial 
in the following two points. 
\begin{itemize}
\item
The linear program used in \cite{AbeledoB96} 
is the same as the linear program used in 
\cite{Rothblum92,VandeVate89} 
to describe the convex hull of 
the characteristic vectors of stable matchings in the 
stable matching problem with strict preferences. 
Thus, it is natural to use the linear program 
used in \cite{Kunysz18}
to describe the convex hull of 
the characteristic vectors of strongly stable matchings 
in the stable matching problem with ties in a bipartite graph.
However, this does not work. More precisely, 
the key lemma (Lemma~\ref{lemma:existence}) 
of this paper does not hold. 
In order to resolve this issue, 
we add the constraints (the inequalities \eqref{eq_3:constraint}) used to 
describe the convex hull of 
the characteristic vectors of matchings in 
a general graph~\cite{Edmonds65}. 
\item
In the setting in \cite{AbeledoB96}, 
for each solution $x$ to the linear program and 
each vertex $v$, the best edge and the worst edge for $v$ 
in the support of $x$ are uniquely determined. 
By contrast, in our setting, since 
preferences contain ties, they may not be 
uniquely determined.
This makes the situation complicated. 
We resolve this issue by using 
the self-duality result for 
strongly stable matchings in a bipartite graph~\cite{Kunysz18}. 
\end{itemize}

The rest of this paper is organized as follows. 
In Section~\ref{section:preliminaries}, we give the formal 
definition of our setting. 
In Section~\ref{section:useful_lemmas}, we give 
lemmas that are needed in the next section. 
In Section~\ref{section:algorithm}, 
we propose our algorithm, and prove its correctness. 

\section{Preliminaries} 
\label{section:preliminaries} 

Let $\mathbb{R}_+$ denote the set of non-negative real numbers. 
For each finite set $U$, each vector $x \in \mathbb{R}_+^U$, 
and each subset $W \subseteq U$, we define 
$x(W) := \sum_{u \in W}x(u)$. 

In this paper, we are given a finite simple undirected graph $G = (V,E)$ with 
a vertex set $V$ and an edge set $E$.
In this paper, we do not distinguish between an edge $e$ and 
the set of end vertices of $e$. 
For each subset $F \subseteq E$ and 
each vertex $v \in V$, 
we define $F(v)$ 
as the set of edges $e \in F$ such that 
$v \in e$.
For each vertex $v \in V$, we are given a complete\footnote{%
For every pair of elements $e,f \in E(v) \cup \{\emptyset\}$, at least one of 
$e \succsim_v f$, $f \succsim_v e$ holds.} and transitive binary 
relation $\succsim_v$ 
on $E(v) \cup \{\emptyset\}$.
For each vertex $v \in V$ and 
each pair of elements $e,f \in E(v) \cup \{\emptyset\}$,
if $e \succsim_v f$ and $f \not\succsim_v e$
(resp.\ $e \succsim_v f$ and $f \succsim_v e$), then
we write $e \succ_v f$
(resp.\ $e \sim_v f$). 
Intuitively speaking, 
if $e \succ_v f$, then 
$v$ prefers $e$ to $f$. 
If $e \sim_v f$, then $v$ is indifferent between $e$ and $f$.
In this paper, we assume that, 
for every vertex $v \in V$ and every edge $e \in E(v)$, 
we have $e \succ_v \emptyset$. 

\begin{definition}
A subset $\mu \subseteq E$ is called a \emph{matching in $G$} if
$|\mu(v)| \le 1$ for every vertex $v \in V$.
\end{definition} 

For each matching $\mu$ in $G$ and each vertex $v \in V$ such that 
$\mu(v) \neq \emptyset$, 
we do not distinguish between $\mu(v)$ and the edge in $\mu(v)$. 

\begin{definition}
Let $\mu$ be a matching in $G$, and let $e$ be an edge in $E \setminus \mu$. 
\begin{itemize}
\item
For each vertex $v \in e$, 
we say that 
$e$ \emph{weakly blocks $\mu$ on $v$}
if 
$e \succsim_v \mu(v)$.
\item
For each vertex $v \in e$, 
we say that 
$e$ \emph{strongly blocks $\mu$ on $v$}
if 
$e \succ_v \mu(v)$.
\item
We say that 
$e$ \emph{blocks $\mu$}
if 
$e$ weakly blocks $\mu$ on every vertex in $e$, and 
$e$ strongly blocks $\mu$ on at least one of vertices in $e$.  
\end{itemize}
\end{definition} 

\begin{definition}
A matching $\mu$ in $G$ 
is said to be 
\emph{strongly stable}
if no edge in $E \setminus \mu$ 
blocks $\mu$. 
\end{definition}

For each matching $\mu$ in $G$, we define the 
characteristic 
vector 
$\chi_{\mu} \in \{0,1\}^E$ by 
$\chi_{\mu}(e) \coloneqq 1$ for each edge $e \in \mu$ and 
$\chi_{\mu}(e) \coloneqq 0$ for each edge $e \in E \setminus \mu$. 
For each subset $F \subseteq E$, a matching $\mu$ in $G$ is called 
a \emph{matching in $F$} if $\mu \subseteq F$. 
For each subset $X \subseteq V$ and 
each matching $\mu$ in $G$, we say that 
\emph{$\mu$ covers $X$} if 
$\mu(v) \neq \emptyset$ for every vertex $v \in X$. 

For each vector $x \in \mathbb{R}_+^E$, 
we define $E_x$ 
as the set of 
edges $e \in E$ such that 
$x(e) > 0$.
For each subset $F \subseteq E$ and 
each subset $X \subseteq V$, we define $F\langle X \rangle$ 
as the set of edges $e \in F$ such that $e \subseteq X$. 

For each vertex $v \in V$ and 
each non-empty subset $F \subseteq E(v)$, 
$F$ is said to be \emph{flat} 
if $e \sim_v f$ for every pair of edges $e,f \in F$. 
For each vertex $v \in V$, 
each symbol $\odot \in \{\succsim_v,\succ_v,\sim_v\}$, and 
each pair of flat subsets $F_1, F_2 \subseteq E(v)$, 
we 
write $F_1 \odot F_2$
if $e \odot f$ 
for every edge $e \in F_1$ and every edge $f \in F_2$. 
For each vertex $v \in V$,
each symbol $\odot \in \{\succsim_v,\succ_v,\sim_v\}$,  
each flat subset $F \subseteq E(v)$, 
and each edge $e \in E(v)$, 
we 
write $e \odot F$ (resp.\ $F \odot e$) 
if $e \odot f$ (resp.\ $f \odot e$) 
for every edge $f \in F$. 
In addition, for each vertex $v \in V$, 
each symbol $\odot \in \{\succsim_v,\succ_v,\sim_v\}$, and
each edge $e \in E(v)$, 
we define 
$E[\mathop{\odot} e]$ (resp.\ $E[e \mathop{\odot}]$) as the set of 
edges $f \in E(v)$
such that $f \mathop{\odot} e$
(resp.\ $e \mathop{\odot} f$). 

\section{Useful Lemmas}
\label{section:useful_lemmas}

Define ${\bf P}$ as the set of vectors $x \in \mathbb{R}_+^E$
satisfying the following conditions. 
\begin{equation} \label{eq_1:constraint} 
x(E(v)) \le 1 
 \ \ \ \mbox{($\forall v \in V$)}.
\end{equation}
\begin{equation} \label{eq_2:constraint} 
\displaystyle{x(E[\sim_v e]) + \sum_{w \in e}x(E[\succ_w e]) \ge 1} 
 \ \ \ \mbox{($\forall e \in E$, $\forall v \in e$)}. 
\end{equation}
\begin{equation} \label{eq_3:constraint} 
\displaystyle{x(E\langle X \rangle) \le \lfloor |X|/2 \rfloor} 
 \ \ \ \mbox{($\forall X \subseteq V$ such that $|X|$ is odd)}. 
\end{equation}
It should be noted that 
the set of vectors $x \in \mathbb{R}_+^E$
satisfying 
\eqref{eq_1:constraint}, \eqref{eq_2:constraint} 
coincides with the convex hull of 
the characteristic vectors of strongly stable matchings 
in the stable matching problem with ties in a bipartite graph~\cite{Kunysz18}.
Furthermore, 
\eqref{eq_3:constraint} 
is used to 
describe the convex hull of 
the characteristic vectors of matchings in 
a general graph~\cite{Edmonds65}. 

Our goal is to prove that 
we can check the existence of a strongly stable matching in $G$ by 
solving 
some linear programs over ${\bf P}$ under a constraint 
that the values for some edges have to be $0$. 
To this end, in this section, we prove useful properties of ${\bf P}$. 

\begin{lemma} \label{lemma:polytope} 
Let $\mu$ be a strongly 
stable matching in $G$.
Then $\chi_{\mu} \in {\bf P}$.
\end{lemma}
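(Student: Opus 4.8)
The plan is to verify that the characteristic vector $\chi_{\mu}$ of a strongly stable matching $\mu$ satisfies each of the three defining families of inequalities of $\mathbf{P}$ in turn. The first and third are essentially free, so the real work is in the second family \eqref{eq_2:constraint}.

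\begin{proof}[Proof proposal]
First I would check \eqref{eq_1:constraint}: since $\mu$ is a matching, $|\mu(v)| \le 1$ for every vertex $v$, and $\chi_{\mu}(E(v)) = |\mu(v)| \le 1$, so \eqref{eq_1:constraint} holds. Next, \eqref{eq_3:constraint} holds because $\mu$ is a matching in $G$: for any subset $X \subseteq V$ with $|X|$ odd, the edges of $\mu$ contained in $X$ form a matching on $X$, so their number is at most $\lfloor |X|/2 \rfloor$, giving $\chi_{\mu}(E\langle X\rangle) \le \lfloor |X|/2\rfloor$. These two steps are immediate and use only the fact that $\chi_{\mu}$ is the characteristic vector of a matching, not strong stability.

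The main work is \eqref{eq_2:constraint}, and this is where strong stability enters. Fix an edge $e \in E$ and a vertex $v \in e$; I must show
\[
\chi_{\mu}(E[\sim_v e]) + \sum_{w \in e}\chi_{\mu}(E[\succ_w e]) \ge 1.
\]
Evaluating $\chi_{\mu}$ on these sets just counts how many matching edges fall into each category, and since $\mu$ matches each vertex at most once, each term is $0$ or $1$ depending on where $\mu(v)$ (respectively $\mu(w)$) sits relative to $e$ in the preference order. My plan is to argue by contradiction: suppose the left-hand side is $0$. Then no matching edge at $v$ is indifferent-or-better to $e$ at $v$, and for each endpoint $w$ of $e$ no matching edge at $w$ is strictly better than $e$ at $w$. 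I would unpack what this forces. The vanishing of $\chi_{\mu}(E[\sim_v e])$, combined with $\chi_{\mu}(E[\succ_v e]) = 0$ (which is one of the summands, taking $w = v$), means $\mu(v)$ is strictly worse than $e$ at $v$, i.e.\ $e \succ_v \mu(v)$; here I use $e \succ_v \emptyset$ to cover the case $\mu(v) = \emptyset$. Let $u$ be the other endpoint of $e$. The summand for $w = u$ gives $\chi_{\mu}(E[\succ_u e]) = 0$, meaning no matching edge at $u$ is strictly better than $e$, so $e \succsim_u \mu(u)$. Thus $e$ weakly blocks $\mu$ at $u$ and strongly blocks $\mu$ at $v$, so $e$ blocks $\mu$, contradicting strong stability. (I should also confirm $e \notin \mu$, which follows since $e \succ_v \mu(v)$ forces $\mu(v) \neq e$.)

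The step I expect to require the most care is the bookkeeping translating the hypothesis ``left-hand side $= 0$'' into the three preference conditions. In particular I must be careful about the case distinctions around ties and the roles of the two endpoints: the indifference term is attached to the distinguished vertex $v$, while the strict-preference terms range over both endpoints of $e$, so the argument is slightly asymmetric. The threshold phenomena at $\emptyset$ (using $e \succ_v \emptyset$) and the observation that each term is an integer in $\{0,1\}$ are what make the contradiction clean, but they need to be spelled out rather than waved through.
\end{proof}
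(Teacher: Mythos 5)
Your proof is correct and takes essentially the same approach as the paper: the paper verifies \eqref{eq_1:constraint} and \eqref{eq_3:constraint} immediately from the matching property and then case-splits directly on whether $e \in \mu$, $\mu(v) \succsim_v e$, or $e \succ_v \mu(v)$, while you run the identical argument in contrapositive form, assuming the left-hand side of \eqref{eq_2:constraint} vanishes and extracting a blocking edge from completeness of $\succsim_v$ and the assumption $e \succ_v \emptyset$. The logical content is the same, so no further comparison is needed.
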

\begin{proof}
Since $\mu$ is a matching in $G$, 
\eqref{eq_1:constraint} and 
\eqref{eq_3:constraint} are 
clearly satisfied. 
Let $e$ and $v$ be an edge in $E$ and a vertex in $e$, 
respectively. 
Then we consider \eqref{eq_2:constraint} for $e$ and $v$. 
If $e \in \mu$, then $\chi_{\mu}(e) = 1$. 
Thus, 
\eqref{eq_2:constraint} holds. 
Assume that 
$e \notin \mu$. 
If $\mu(v) \succsim_v e$, then 
$\chi_{\mu}(E[\succsim_v e]) = 1$. 
Thus, 
\eqref{eq_2:constraint} holds. 
On the other hand, if $e \succ_v \mu(v)$, then 
since $\mu$ is strongly stable, 
$\mu(w) \succ_w e$, where 
we assume that $e = \{v,w\}$. 
Thus, 
\eqref{eq_2:constraint} holds. 
This completes the proof. 
\end{proof} 

Throughout this paper, 
we assume that 
${\bf P} \neq \emptyset$. 
Otherwise, Lemma~\ref{lemma:polytope} 
implies that 
there does not exist a strongly stable matching in $G$. 

\begin{lemma} \label{lemma:integral}
Let $x$ be an element in ${\bf P} \cap \{0,1\}^E$.
Define $\mu \coloneqq \{e \in E \mid x(e) = 1\}$.
Then $\mu$ is a strongly stable matching in $G$. 
\end{lemma}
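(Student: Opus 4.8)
The plan is to prove the two defining properties of a strongly stable matching separately, using the three families of constraints of ${\bf P}$ in a targeted way. Since $x \in \{0,1\}^E$, we have $x = \chi_\mu$, and I would first record the elementary identity $x(E(v)) = |\mu(v)|$ for every vertex $v \in V$. Constraint \eqref{eq_1:constraint} then gives $|\mu(v)| \le 1$ for all $v$, so $\mu$ is a matching in $G$. Note that \eqref{eq_3:constraint} is not needed for this lemma; it earns its place elsewhere in the argument.

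For strong stability I would argue by contradiction: suppose some edge $e = \{v,w\} \in E \setminus \mu$ blocks $\mu$, and fix the labelling so that $e$ strongly blocks $\mu$ on $v$, i.e.\ $e \succ_v \mu(v)$, while $e$ weakly blocks $\mu$ on the other endpoint, i.e.\ $e \succsim_w \mu(w)$ (this is possible since a blocking edge weakly blocks on every endpoint and strongly blocks on at least one). The heart of the argument is the following counting fact, which holds because $x = \chi_\mu$ and $\mu$ is a matching: for any endpoint $u \in e$ and any relation $\odot \in \{\sim_u,\succ_u\}$, the quantity $x(E[\odot e])$ equals $1$ if the unique edge $\mu(u)$ lies in $E[\odot e]$ (that is, if $\mu(u) \odot e$) and equals $0$ otherwise. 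I would then instantiate \eqref{eq_2:constraint} for the pair $(e,v)$, whose left-hand side is exactly $x(E[\sim_v e]) + x(E[\succ_v e]) + x(E[\succ_w e])$, and show all three terms vanish.

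The main work, and the only place demanding care, is translating the two blocking relations into the strict/indifferent relations appearing in \eqref{eq_2:constraint} by unfolding the definitions of $\succ$ and $\sim$ from $\succsim$. From $e \succ_v \mu(v)$ one obtains $\mu(v) \not\succsim_v e$, and this single fact rules out both $\mu(v) \sim_v e$ and $\mu(v) \succ_v e$, so the first two terms are $0$; from the weak-blocking relation $e \succsim_w \mu(w)$ one obtains $\mu(w) \not\succ_w e$, so the third term is $0$ as well. Hence the left-hand side of \eqref{eq_2:constraint} for $(e,v)$ equals $0 < 1$, contradicting $x \in {\bf P}$. This contradiction shows that no edge of $E \setminus \mu$ blocks $\mu$, so $\mu$ is strongly stable.

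I do not expect a genuine obstacle here: the proof is a direct verification once the counting fact is in place. The only delicacy is bookkeeping with the derived relations, including the degenerate case $\mu(u) = \emptyset$; there the assumption $e \succ_u \emptyset$ guarantees that $\mu(u)$ is neither $\sim_u e$ nor $\succ_u e$, so the corresponding terms vanish trivially and the argument goes through unchanged.
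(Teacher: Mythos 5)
Your proposal is correct and follows essentially the same route as the paper's proof: constraint \eqref{eq_1:constraint} gives the matching property, and a blocking edge $e$ with strongly blocked endpoint $v$ forces the left-hand side of \eqref{eq_2:constraint} for $(e,v)$ to vanish, contradicting $x \in {\bf P}$. You merely spell out the counting fact and the unfolding of $\succ$ and $\sim$ from $\succsim$, which the paper leaves implicit.
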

\begin{proof}
Since \eqref{eq_1:constraint} implies that 
$\mu$ is a matching in $G$, 
we prove that 
$\mu$ is strongly stable. 
Assume that there exists an edge $e \in E \setminus \mu$ that 
blocks $\mu$. 
Then $e \succsim_w \mu(w)$ for every vertex $w \in e$, and 
there exists a vertex $v \in e$ such that 
$e \succ_v \mu(v)$. 
In this case, the left-hand side of \eqref{eq_2:constraint} 
for $e$ and $v$ is equal to $0$. 
This contradicts the fact that $x \in {\bf P}$. 
This completes the proof. 
\end{proof} 

The proof of the following lemma is basically the same as 
the proof of \cite[Lemma~12]{Kunysz18}. 
For readers' convenience, 
we give its proof 
since the proof is not contained in \cite{Kunysz18}. 

\begin{lemma} \label{lemma:self_dual} 
For every pair of vectors $x,z \in \mathbb{R}_+^E$ 
satisfying \eqref{eq_1:constraint}, \eqref{eq_2:constraint}  
and every edge $e \in E_z$, 
the following statements hold. 
\begin{description}
\item[(S1)]
$x(E(v)) = 1$ for every vertex $v \in e$. 
\item[(S2)]
$x(E[\sim_v e]) + \sum_{w \in e}x(E[\succ_w e]) = 1$ 
for every vertex $v \in e$. 
\item[(S3)] 
$x(E[\sim_v e]) 
=
x(E[\sim_w e])$, where 
we assume that $e = \{v,w\}$.
\end{description}
\end{lemma}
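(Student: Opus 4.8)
The plan is to establish (S1), (S2), and (S3) together from a single symmetrization argument that exploits the fact that $x$ and $z$ enter \eqref{eq_1:constraint} and \eqref{eq_2:constraint} in completely symmetric roles (this is the "self-duality"). For an edge $e=\{v,w\}$ and an endpoint $v\in e$, write $L_v(e):=x(E[\succsim_v e])+x(E[\succ_w e])$ for the left-hand side of \eqref{eq_2:constraint} at the pair $(e,v)$, using $E[\succsim_v e]=E[\sim_v e]\cup E[\succ_v e]$, and let $L'_v(e)$ denote the same expression with $z$ in place of $x$. Applying \eqref{eq_2:constraint} to $x$ and to $z$ gives $L_v(e)\ge 1$ and $L'_v(e)\ge 1$ for every $e$ and every $v\in e$.

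First I would form the two cross sums
\[ T:=\sum_{e\in E}\sum_{v\in e} z(e)\,L_v(e),\qquad T':=\sum_{e\in E}\sum_{v\in e} x(e)\,L'_v(e), \]
and record the lower bounds $T\ge 2z(E)$ and $T'\ge 2x(E)$ coming from $L_v(e),L'_v(e)\ge 1$ (each edge contributes two ordered endpoint-pairs). The core computation rewrites $T$: collapsing the inner sum (the term $\sum_{v\in e}x(E[\succ_{w}e])$, with $w$ the partner of $v$, equals $\sum_{v\in e}x(E[\succ_v e])$) and then interchanging the edge/vertex order of summation yields $T=\sum_{v\in V}\sum_{e,f\in E(v)} z(e)\,x(f)\,c^v_{ef}$, where $c^v_{ef}$ equals $2$ if $f\succ_v e$, equals $1$ if $f\sim_v e$, and equals $0$ if $e\succ_v f$. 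The identical manipulation gives $T'=\sum_{v\in V}\sum_{e,f\in E(v)} z(e)\,x(f)\,c^v_{fe}$.

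The step I expect to carry the argument is the elementary identity $c^v_{ef}+c^v_{fe}=2$ for every pair $e,f\in E(v)$, verified by checking the three cases $f\succ_v e$, $f\sim_v e$, $e\succ_v f$. Summing $T$ and $T'$ then collapses the coefficients to give $T+T'=2\sum_{v\in V} z(E(v))\,x(E(v))$. Combining this with the two lower bounds and with $z(E)=\tfrac12\sum_v z(E(v))$, $x(E)=\tfrac12\sum_v x(E(v))$ produces the sandwich
\[ \sum_{v\in V} 2\,z(E(v))\,x(E(v)) = T+T' \ge 2z(E)+2x(E) = \sum_{v\in V}\bigl(z(E(v))+x(E(v))\bigr). \]
Since $z(E(v)),x(E(v))\le 1$ by \eqref{eq_1:constraint}, the elementary inequality $2ab\le a+b$ for $a,b\in[0,1]$ (i.e.\ $a(1-b)+b(1-a)\ge 0$) gives the reverse inequality termwise. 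Hence every inequality above is forced to be an equality.

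Finally I would read off the three claims from the equalities just forced. From $2z(E(v))x(E(v))=z(E(v))+x(E(v))$ we get $z(E(v))\bigl(1-x(E(v))\bigr)=0$, so any $v$ with $z(E(v))>0$ satisfies $x(E(v))=1$; as $z(E(v))\ge z(e)>0$ for $e\in E_z$ and $v\in e$, this is (S1). From the now-tight identity $T=2z(E)$ together with $L_v(e)+L_w(e)\ge 2$, every $e\in E_z$ forces $L_v(e)=L_w(e)=1$, which is exactly (S2). Subtracting the two instances of (S2) at the endpoints $v,w$ of $e$ cancels $x(E[\succ_v e])+x(E[\succ_w e])$ and leaves $x(E[\sim_v e])=x(E[\sim_w e])$, giving (S3). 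The only genuine subtlety is the index bookkeeping in the rearrangement of $T$; once that is handled, the symmetrization and the two elementary inequalities do the rest.
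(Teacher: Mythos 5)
Your proof is correct, and it takes a genuinely different route from the paper's. The paper argues via explicit linear programming duality: it formulates the program of maximizing $x(E)$ subject to \eqref{eq_1:constraint}, \eqref{eq_2:constraint}, writes down its dual, maps every feasible $x$ to the dual-feasible pair $(\alpha_x,\beta_x)$ with $\alpha_x(v)=x(E(v))$ and $\beta_x(e,v)=x(e)/2$, checks that the primal and dual objective values coincide (so both are optimal), and then invokes the complementary slackness theorem for the pair consisting of $x$ (primal optimal) and $(\alpha_z,\beta_z)$ (dual optimal) to get (S1) and (S2); (S3) is obtained by subtracting the two instances of (S2), exactly as you do. Your argument is the same self-duality phenomenon unrolled into a self-contained computation: your cross sums $T$ and $T'$ are precisely the complementary-slackness pairings of each vector against the other vector's constraints \eqref{eq_2:constraint}; your identity $c^v_{ef}+c^v_{fe}=2$ (which uses completeness of $\succsim_v$ for the trichotomy, an assumption the paper does make) does the work of the paper's dual-feasibility computation; and the inequality $2ab\le a+b$ on $[0,1]^2$, applied through \eqref{eq_1:constraint}, replaces weak duality, after which forcing every inequality in your sandwich to be tight is complementary slackness carried out by hand. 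What your version buys is elementarity: no dual program and no black-box duality or complementary-slackness theorems, only double counting, a case check, and an elementary inequality, so the lemma becomes verifiable line by line. What the paper's version buys is that the self-dual structure, namely the map $x\mapsto(\alpha_x,\beta_x)$, is made explicit; this is conceptually tied to the self-duality result of Kunysz for strongly stable matchings in bipartite graphs that the paper cites, and it explains where the construction comes from rather than presenting the symmetrization as a found identity.
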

\begin{proof} 
We consider the following linear program.
\begin{equation} \label{eq_1:self_dual}
\mbox{Maximize} \ \ 
 x(E)  \vspace{2mm} \ \ \ \ 
\mbox{subject to} \ \ \ \ 
\eqref{eq_1:constraint}, \ 
\eqref{eq_2:constraint}, \ 
x \in \mathbb{R}_+^E.
\end{equation}
Define $\mathcal{T}$ as the set of ordered pairs 
$(e,v)$ of an edge $e \in E$ and a vertex $v \in e$. 
Then the dual problem of \eqref{eq_1:self_dual} is described as follows. 
\begin{equation} \label{eq_2:self_dual}
\begin{array}{cl}
\mbox{Minimize} \ \ 
&
\displaystyle{\alpha(V) 
-
\sum_{e \in E}\sum_{v \in e}\beta(e,v)
}\vspace{2mm}\\ 
\mbox{subject to} \ \ 
& 
\displaystyle{
\sum_{v \in e}
\Big(\alpha(v)
-
\sum_{f \in E[\sim_v e]}\beta(f,v)
- 
\sum_{f \in E[e \succ_v]}\sum_{w \in f}\beta(f,w)\Big)
\ge 1} \ \ 
\ \ \ \mbox{($\forall e \in E$)} \vspace{1mm}\\
& 
(\alpha,\beta) \in \mathbb{R}^V_+ \times \mathbb{R}_+^{\mathcal{T}}.
\end{array}
\end{equation}
For each feasible solution $x$ to \eqref{eq_1:self_dual}, 
we define $\alpha_x \in \mathbb{R}_+^V$ and 
$\beta_x \in \mathbb{R}_+^{\mathcal{T}}$
by 
\begin{equation*}
\alpha_x(v) \coloneqq 
x(E(v)), \ \ \ 
\beta_x(e,v) \coloneqq \dfrac{x(e)}{2}.
\end{equation*}

\begin{claim} \label{claim:lemma_1:lemma:self_dual} 
For every feasible solution $x$ to \eqref{eq_1:self_dual}, 
$(\alpha_x,\beta_x)$ is a feasible solution to 
\eqref{eq_2:self_dual}. 
\end{claim}
\begin{proof} 
Let $x$ be a feasible solution to \eqref{eq_1:self_dual}.
Then for every edge $e \in E$, 
\begin{equation*}
\begin{split}
&
\sum_{v \in e}
\Big(\alpha_x(v)
-
\sum_{f \in E[\sim_v e]}\beta_x(f,v)
- 
\sum_{f \in E[e \succ_v]}\sum_{w \in f}\beta_x(f,w)\Big)\\
&
=
\sum_{v \in e}
\Big(x(E(v))
-
\dfrac{x(E[\sim_v e])}{2}
- 
x(E[e \succ_v])\Big)
=
\sum_{v \in e}
\Big(x(E(v))
-
x(E[e \succsim_v])
+ 
\dfrac{x(E[\sim_v e])}{2}\Big)\\
&
=
\sum_{v \in e}
\Big(x(E[\succ_v e])
+ 
\dfrac{x(E[\sim_v e])}{2}\Big)
=
\dfrac{1}{2}
\sum_{v \in e}
\Big(x(E[\sim_v e])
+
\sum_{w \in e}x(E[\succ_w e])
\Big)
\ge
\dfrac{1}{2}
\cdot 2 = 1. 
\end{split} 
\end{equation*}
This completes the proof. 
\end{proof} 

\begin{claim} \label{claim:lemma_2:lemma:self_dual} 
For every feasible solution $x$ to \eqref{eq_1:self_dual}, 
$x$ and $(\alpha_x,\beta_x)$ are optimal solutions to 
\eqref{eq_1:self_dual} and 
\eqref{eq_2:self_dual}, respectively.
\end{claim}
\begin{proof} 
For every feasible solution $x$ to \eqref{eq_1:self_dual}, 
since 
\begin{equation*}
\alpha_x(V) 
-
\sum_{e \in E}\sum_{v \in e}\beta_x(e,v)
=
\sum_{v \in V}x(E(v))
- 
\sum_{e \in E}\sum_{v \in e}\dfrac{x(e)}{2}
=
2x(E)
- 
x(E) = x(E), 
\end{equation*}
Claim~\ref{claim:lemma_1:lemma:self_dual}
and 
the duality theorem of linear programming
(see, e.g., \cite[Theorem~3.7]{ConfortiCZ14}) imply 
that 
$x$ and $(\alpha_x,\beta_x)$ are optimal solutions to 
\eqref{eq_1:self_dual} and 
\eqref{eq_2:self_dual}, respectively.
\end{proof}

Let $x,z$ be 
vectors in $\mathbb{R}_+^E$ 
satisfying \eqref{eq_1:constraint}, \eqref{eq_2:constraint}. 
Then Claim~\ref{claim:lemma_2:lemma:self_dual}
implies that 
$x$ and $(\alpha_z,\beta_z)$ are optimal solutions to 
\eqref{eq_1:self_dual} and 
\eqref{eq_2:self_dual}, respectively.
Let 
$e$ be an edge in $E_z$.
Then we have 
$\alpha_z(v) > 0$ and
$\beta_z(e,v) > 0$
for every vertex $v \in e$.
Thus, the 
complementary slackness theorem of linear programming
(see, e.g., \cite[Theorem 3.8]{ConfortiCZ14}) 
implies (S1) for every vertex $v \in e$ and (S2)
for every vertex $v \in e$.  
Furthermore, (S2) implies that  
\begin{equation*}
x(E[\sim_v e]) + \sum_{u \in e}x(E[\succ_u e]) 
=
1
=
x(E[\sim_w e]) + \sum_{u \in e}x(E[\succ_u e]),
\end{equation*}
where we assume that $e = \{v,w\}$. 
This implies (S3).
This completes the proof. 
\end{proof} 

Define $V_0$ as the set of vertices $v \in V$ such that 
$x(E(v)) = 0$ 
for every element $x \in {\bf P}$. 
Then we define $V_1 \coloneqq V \setminus V_0$. 
That is, 
$V_1$ is the set of vertices $v \in V$ such that 
$x(E(v)) > 0$ for some element $x \in {\bf P}$. 
Notice that, for every element $x \in {\bf P}$, 
we have $E_x \subseteq E\langle V_1 \rangle$. 

\begin{lemma} \label{lemma:partition}
For every vertex $v \in V_1$ and 
every element $x \in {\bf P}$, 
we have $x(E(v)) = 1$.
\end{lemma}
\begin{proof}
Let $v$ be a vertex in $V_1$. 
The definition of $V_1$ implies that 
there exists an element $z \in {\bf P}$
such that 
$z(E(v)) > 0$. 
This implies that there exists an edge $e \in E(v)$ such that 
$e \in E_z$. 
Thus, (S1) of 
Lemma~\ref{lemma:self_dual}
implies that 
$x(E(v)) = 1$ for every element $x \in {\bf P}$. 
\end{proof} 

\section{Algorithm}
\label{section:algorithm} 

In this section, we explain our algorithm 
for the problem of checking the existence of a strongly 
stable matching in $G$. 

For each element $x \in {\bf P}$ and each 
vertex $v \in V_1$, we define 
$B_x(v)$ (resp.\ $W_x(v)$) as the set of 
edges $e \in E_x(v)$
such that 
$e \succsim_v f$ 
(resp.\ $f \succsim_v e$) 
for every edge $f \in E_x(v)$.
That is, 
$B_x(v)$ (resp.\ $W_x(v)$) 
is the set of the best (resp.\ worst) edges in $E_x(v)$ for $v$. 
Notice that 
$B_x(v)$ and $W_x(v)$ are flat. 
For each element 
$x \in {\bf P}$, we define 
$T_x$ as the set of 
edges $e \in E\langle V_1 \rangle$
satisfying one of the following conditions for each vertex $v \in e$. 
\begin{itemize}
\item
$e \in B_x(v) \cup W_x(v)$.
\item
$B_x(v) \succ_v e \succ_v W_x(v)$. 
\end{itemize}
Notice that, for every element $x \in {\bf P}$, 
$E_x \subseteq T_x$. 
For each subset $F \subseteq E$, we define 
${\bf P}(F)$ by 
\begin{equation*}
{\bf P}(F) \coloneqq
\{x \in {\bf P} \mid 
\mbox{$x(e) = 0$ for every edge 
$e \in E \setminus F$}\}. 
\end{equation*}

The following lemma plays an important role in 
the proof of the correctness of 
the proposed algorithm. 
We give the proof of Lemma~\ref{lemma:existence} in 
Section~\ref{section:existence}. 

\begin{lemma} \label{lemma:existence} 
Let $x$ be an element in ${\bf P}$. 
Assume that there exists a strongly stable matching 
in $G$. 
Then there exists a strongly stable matching $\mu$ in $G$
such that $\chi_{\mu} \in {\bf P}(T_x)$. 
\end{lemma}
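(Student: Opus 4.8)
The plan is to start from an arbitrary strongly stable matching $\nu$ (one exists by hypothesis) and to deform it, by exchanges along alternating structures built from $\nu$ and the support $E_x$, into a strongly stable matching whose support lies in $T_x$. First I would record what membership in ${\bf P}$ forces. By Lemma~\ref{lemma:polytope} we have $\chi_{\nu} \in {\bf P}$, and by Lemma~\ref{lemma:partition} the matching $\nu$ covers $V_1$. Applying Lemma~\ref{lemma:self_dual} to the pair $(x,\chi_{\nu})$ yields, for each edge $e = \{v,w\} \in \nu$, the tight identities
\[
x(E[\sim_v e]) + x(E[\succ_v e]) + x(E[\succ_w e]) = 1
\quad\text{and}\quad
x(E[e \succ_v]) = x(E[\succ_w e]).
\]
These say that the $x$-weight lying strictly below $e$ at one endpoint equals the weight lying strictly above $e$ at the other endpoint; this balance is the mechanism that will let me trade edges without leaving ${\bf P}$.

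Next I would pin down exactly how an edge of $\nu$ can fail to lie in $T_x$. Since, at an endpoint $v$, membership in $T_x$ amounts to $B_x(v) \succsim_v e \succsim_v W_x(v)$, a bad edge $e = \{v,w\} \in \nu \setminus T_x$ must protrude beyond the support interval. Using the identities above I would show this happens only in a coupled way: if $e \succ_v B_x(v)$, then every $x$-support edge at $v$ is strictly worse than $e$, so $x(E[\succ_v e]) = x(E[\sim_v e]) = 0$, and the first identity forces $x(E[\succ_w e]) = 1$; hence every $x$-support edge at $w$ is strictly better than $e$ and $W_x(w) \succ_w e$. Thus a bad edge sticks out \emph{above} the interval at one endpoint and \emph{below} it at the other.

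The engine is then an alternating walk driven from the ``below'' endpoint. Because $x(E[\succ_w e]) = 1 > 0$, I can choose a support edge $f = \{w,w'\} \in E_x$ with $f \succ_w \nu(w)$; strong stability of $\nu$ forbids $f$ from blocking, so $\nu(w') \succ_{w'} f$, and applying the second identity to the $\nu$-edge at $w'$ again exhibits $x$-support strictly above $\nu$ at the far endpoint. Iterating produces an alternating walk of $E_x$-edges and $\nu$-edges in the finite graph, from which I would extract an alternating cycle through the offending region, and swap $\nu$ for $\nu \triangle C$. Every edge introduced by the swap is a support edge, hence automatically an interval edge of $T_x$, so the swap pushes $\nu$ toward $T_x$; I would then verify that the new matching is again strongly stable and that a potential (the number of $\nu$-edges outside $T_x$) strictly decreases, and repeat until the matching lies in $T_x$.

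I expect two steps to carry the real difficulty, and they correspond to the two points flagged in the introduction. The first is checking that the swap preserves strong stability: because ties make $B_x(v)$ and $W_x(v)$ non-unique, I cannot track a single ``current rank'' and must instead argue at the level of the flat sets together with the self-duality of \eqref{eq_1:self_dual}, re-deriving the tight identities along the modified matching. The second, and the reason the odd-set inequalities \eqref{eq_3:constraint} are indispensable, is guaranteeing that the alternating structure closes into a genuine even cycle that can be swapped integrally rather than into an odd obstruction: here I would use that ${\bf P}$ lies inside Edmonds' matching polytope, so the reroutings correspond to convex combinations of integral matchings on $V_1$ and no parity barrier blocks the repair. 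Without \eqref{eq_3:constraint} such an odd obstruction could arise and the conclusion would fail, which is consistent with the remark that dropping \eqref{eq_3:constraint} breaks the lemma.
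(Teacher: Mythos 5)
Your identities from Lemma~\ref{lemma:self_dual} are correctly derived, and your instinct that \eqref{eq_3:constraint} supplies integrality on the non-bipartite part points in the right direction, but your central structural claim is false, and it fails on exactly the tie configurations the lemma is designed to handle. You read membership of $e$ in $T_x$ at an endpoint $v$ as the interval condition $B_x(v) \succsim_v e \succsim_v W_x(v)$, but the definition requires $e \in B_x(v) \cup W_x(v)$ (genuine membership in those sets, hence $x(e) > 0$) or the \emph{strict} condition $B_x(v) \succ_v e \succ_v W_x(v)$. Hence an edge of $\nu$ that is tied with the support extremes but carries no $x$-weight lies outside $T_x$ without ``protruding'' anywhere, and your identities then force nothing strict: from $e \sim_v B_x(v)$ you only get $x(E[\succ_w e]) = 1 - x(E[\sim_v e])$, which need not equal $1$ or even be positive. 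Concretely, take a $4$-cycle on $v,w,q,p$ with edges $e=\{v,w\}$, $g=\{w,q\}$, $h=\{q,p\}$, $f=\{p,v\}$, every vertex indifferent between its two incident edges. Then $\nu=\{e,h\}$ is strongly stable, $x=\chi_{\{f,g\}}\in{\bf P}$, and $e\in\nu\setminus T_x$, yet there is no strict comparison anywhere in the instance, so the alternating walk you describe (launched from a vertex where a support edge is strictly better than $\nu$, and propagated by strong stability forcing strict preference at the far endpoint) can never start. The repair $\mu=\{f,g\}$ is forced by parity and matching structure, not by strict preferences; this is precisely why the paper partitions $V_1$ into $P,Q,R,S$ with the tie cases written into the definitions of $P$, $Q$ and $R$, and then rebuilds the matching globally rather than by local swaps.

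Second, the two steps you defer are the actual content of the proof, and your sketches would not deliver them. (i) In a non-bipartite graph your walk need not close into an even alternating cycle through a bad edge (it can close into odd structures), and ``${\bf P}$ lies inside Edmonds' polytope, so the reroutings correspond to convex combinations of integral matchings'' is not a repair step. The paper's use of \eqref{eq_3:constraint} is much more specific: $x$ restricted to $K=\bigcup_{v\in R}E_x(v)$ is a point of the matching polytope of $G_K=(R,K)$ with value $|R|/2$, so an extreme optimal point is an integral perfect matching $\mu_K$ of $R$ inside the support; on the bipartite side $(P\cup Q,L)$, the rescaled vector $z_L$ is a fractional perfect matching and Birkhoff gives $\mu_L$. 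The matching $\mu_L\cup\mu_K\cup\overline{\sigma}$ is then built in one shot, with no induction or potential function. (ii) Strong stability of the new matching is a genuinely separate argument (Lemma~\ref{lemma:new_matching}), a case analysis resting on the endpoint-transfer statements (A1)--(A4) of Lemma~\ref{lemma:sim} together with Lemmas~\ref{lemma:bipartite_p} and~\ref{lemma:bipartite_q}. Your plan to ``re-derive the tight identities along the modified matching'' is circular: Lemma~\ref{lemma:self_dual} applies only to vectors already known to satisfy \eqref{eq_1:constraint} and \eqref{eq_2:constraint}, and establishing \eqref{eq_2:constraint} for the new matching is essentially the strong stability you are trying to prove. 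Finally, even where your walk does start, nothing guarantees the extracted cycle contains the bad edge, so your potential need not decrease.
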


The proposes algorithm is described in Algorithm~\ref{alg:main}. 
Recall that, in this paper, we assume that ${\bf P} \neq \emptyset$. 
In the course of the algorithm, since $z_t \in {\bf P}(T_{z_t})$, 
we have ${\bf P}(T_{z_t}) \neq \emptyset$. 

\begin{algorithm}[ht]
Find an element $z_1 \in {\bf P}$. 
Set $t \coloneqq 1$. \\
\While{there exists an edge $f_t \in E_{z_t}$ such that $z_t(f_t) \neq 1$}
{
  Let $v_t$ be a vertex in $f_t$, and let $g_t$ be an edge in $W_{z_t}(v_t)$.\\
  Find an element $a_t \in {\bf P}(T_{z_t})$ maximizing $a_t(g_t)$ 
  among all the elements in ${\bf P}(T_{z_t})$.\\
  \uIf{$a_t(g_t) = 1$}
  {
    Define $z_{t+1} \coloneqq a_t$. 
  }
  \Else
  {
    Find an element $b_t \in {\bf P}(T_{z_t})$ minimizing $b_t(g_t)$ 
    among all the elements in ${\bf P}(T_{z_t})$.\\
    \uIf{$b_t(g_t) = 0$}
    {
      Define $z_{t+1} \coloneqq b_t$. 
    }
    \Else
    {
      Output {\bf No} and halt. 
    }
    Set $t \coloneqq t + 1$. \\
  }
}
Output $z_t$ and halt. 
\caption{Algorithm for checking the existence of a strongly stable matching}
\label{alg:main}
\end{algorithm}

We first prove that the number of iterations in 
Algorithm~\ref{alg:main} is bounded by a polynomial in the input 
size.
In the course of Algorithm~\ref{alg:main}, 
since $z_{t+1} \in {\bf P}(T_{z_t})$, 
$E_{z_{t+1}} \subseteq T_{z_t}$.
Thus, for every vertex $v \in V_1$, 
$W_{z_{t+1}}(v) \succsim_v W_{z_{t}}(v)$ and 
$B_{z_{t}}(v) \succsim_v B_{z_{t+1}}(v)$. 

\begin{lemma} \label{lemma:alg_iteration_sub}
In the course of Algorithm~\ref{alg:main},
we have $T_{z_{t+1}} \subseteq T_{z_t}$. 
\end{lemma}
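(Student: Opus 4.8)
The plan is to prove the inclusion pointwise: I take an arbitrary edge $e \in T_{z_{t+1}}$ and show that $e \in T_{z_t}$. Since $e \in E\langle V_1 \rangle$, both endpoints of $e$ lie in $V_1$, so for each $v \in e$ the sets $B_{z_t}(v)$, $W_{z_t}(v)$, $B_{z_{t+1}}(v)$, $W_{z_{t+1}}(v)$ are all defined and, by Lemma~\ref{lemma:partition}, nonempty (as $z_t(E(v)) = z_{t+1}(E(v)) = 1 > 0$). I would then split into two cases according to whether $e$ lies in the support $E_{z_{t+1}}$.

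The first case, $e \in E_{z_{t+1}}$, is immediate: since $z_{t+1} \in {\bf P}(T_{z_t})$ we have $E_{z_{t+1}} \subseteq T_{z_t}$, so $e \in T_{z_t}$ at once. This case also absorbs any situation in which the first defining condition of $T_{z_{t+1}}$—membership in $B_{z_{t+1}}(v) \cup W_{z_{t+1}}(v)$—holds at some endpoint $v$, because $B_{z_{t+1}}(v) \cup W_{z_{t+1}}(v) \subseteq E_{z_{t+1}}(v) \subseteq E_{z_{t+1}}$.

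The substantive case is $e \notin E_{z_{t+1}}$. Here the first defining condition of $T_{z_{t+1}}$ must fail at both endpoints, so for each $v \in e$ the second condition holds, namely $B_{z_{t+1}}(v) \succ_v e \succ_v W_{z_{t+1}}(v)$. I would then invoke the monotonicity recorded just before the lemma, $B_{z_t}(v) \succsim_v B_{z_{t+1}}(v)$ and $W_{z_{t+1}}(v) \succsim_v W_{z_t}(v)$, both of which follow from $E_{z_{t+1}} \subseteq T_{z_t}$, and compose these weak comparisons with the strict ones above. Chaining $B_{z_t}(v) \succsim_v B_{z_{t+1}}(v) \succ_v e$ and $e \succ_v W_{z_{t+1}}(v) \succsim_v W_{z_t}(v)$ yields $B_{z_t}(v) \succ_v e \succ_v W_{z_t}(v)$, which is exactly the second defining condition of $T_{z_t}$ at $v$. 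As this holds at both endpoints, $e \in T_{z_t}$, completing the argument.

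The one point needing genuine care, and the step I would flag as the main obstacle, is the composition of a weak preference with a strict one in the presence of ties: one must verify that $a \succsim_v b$ together with $b \succ_v c$ forces $a \succ_v c$ (and symmetrically that $a \succ_v b$ with $b \succsim_v c$ gives $a \succ_v c$). This is where the flat-set notations $B_{z_t}(v) \succsim_v B_{z_{t+1}}(v)$ and $B_{z_{t+1}}(v) \succ_v e$ must be unpacked edge by edge, and transitivity of $\succsim_v$ combined with its completeness applied to rule out the reverse relation $c \succsim_v a$. Everything else reduces to bookkeeping across the two cases.
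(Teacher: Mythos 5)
Your proof is correct, but it takes a genuinely different—and in fact shorter—route than the paper's. The paper fixes a vertex $v \in e$ and runs a four-way case analysis on the position of $e$ relative to $B_{z_{t+1}}(v)$ and $W_{z_{t+1}}(v)$ (strictly between; in both sets; in the worst set only; in the best set only), and inside the latter three cases it further splits on whether $B_{z_t}(v) \sim_v B_{z_{t+1}}(v)$ or $W_{z_{t+1}}(v) \sim_v W_{z_t}(v)$, concluding set inclusions such as $B_{z_{t+1}}(v) \subseteq B_{z_t}(v)$ in the tie sub-cases. Your observation that $B_{z_{t+1}}(v) \cup W_{z_{t+1}}(v) \subseteq E_{z_{t+1}}(v)$ together with $E_{z_{t+1}} \subseteq T_{z_t}$ (which holds because $z_{t+1} \in {\bf P}(T_{z_t})$) collapses all of those cases at once: any edge satisfying the first defining condition of $T_{z_{t+1}}$ at some endpoint lies in the support of $z_{t+1}$ and is therefore already in $T_{z_t}$, so only the ``strictly between at both endpoints'' situation remains, and there the monotonicity chaining you perform coincides with the paper's first case. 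What your route buys is brevity and the avoidance of the delicate inclusions like $B_{z_{t+1}}(v) \subseteq B_{z_t}(v)$, which require an edge of $E_{z_{t+1}}(v)$ to lie in $E_{z_t}(v)$ as well—a fact that is itself most cleanly justified by exactly your support argument; what the paper's route provides is per-vertex information about which condition of $T_{z_t}$ holds for $e$, though that extra information is not used elsewhere. One small remark on the step you flag as the main obstacle: completeness of $\succsim_v$ is not actually needed, since transitivity alone suffices (from $a \succsim_v b \succ_v c$ one gets $a \succsim_v c$ by transitivity, and $c \succsim_v a$ would yield $c \succsim_v b$, contradicting $b \succ_v c$); the unpacking over the flat sets is legitimate because, as you note via Lemma~\ref{lemma:partition}, the sets $B_{z_t}(v)$, $W_{z_t}(v)$, $B_{z_{t+1}}(v)$, $W_{z_{t+1}}(v)$ are nonempty for $v \in V_1$.
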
 
\begin{proof}
Let $e$ be an edge in $T_{z_{t+1}}$. 
Then we prove that $e \in T_{z_t}$. 
Let us fix a vertex $v \in e$. 

If $B_{z_{t+1}}(v) \succ_v e \succ_v W_{z_{t+1}}(v)$, then  
\begin{equation*}
B_{z_t}(v) \succsim_v B_{z_{t+1}}(v) \succ_v e \succ_v W_{z_{t+1}}(v)
\succsim_v W_{z_t}(v). 
\end{equation*}
Thus, we can assume that 
$e \in B_{z_{t+1}}(v) \cup W_{z_{t+1}}(v)$. 

Assume that $e \in B_{z_{t+1}}(v) \cap W_{z_{t+1}}(v)$. 
If $B_{z_t}(v) \sim_v B_{z_{t+1}}(v)$,
then we have 
$e \in B_{z_{t+1}}(v) \subseteq B_{z_t}(v)$. 
Furthermore, 
if $W_{z_{t+1}}(v) \sim_v W_{z_t}(v)$, 
then 
we have 
$e \in W_{z_{t+1}}(v) \subseteq W_{z_t}(v)$. 
Thus, we can assume that 
$B_{z_t}(v) \succ_v B_{z_{t+1}}(v)$
and 
$W_{z_{t+1}}(v) \succ_v W_{z_t}(v)$.
Then 
\begin{equation*}
B_{z_t}(v) \succ_v B_{z_{t+1}}(v) \succsim_v e \succsim_v W_{z_{t+1}}(v)
\succ_v W_{z_t}(v). 
\end{equation*}

Assume that $e \in W_{z_{t+1}}(v) \setminus B_{z_{t+1}}(v)$. 
Since $e \notin B_{z_{t+1}}(v)$, 
$B_{z_t}(v) \succsim_v B_{z_{t+1}}(v) \succ_v e$. 
Thus, if $W_{z_{t+1}}(v) \succ_v W_{z_t}(v)$, then 
\begin{equation*}
B_{z_t}(v) \succ_v e \succsim_v W_{z_{t+1}}(v) \succ_v W_{z_t}(v).
\end{equation*}
This implies that 
we can 
assume that 
$W_{z_{t+1}}(v) \sim_v W_{z_t}(v)$.
Then 
$e \in W_{z_{t+1}}(v) \subseteq W_{z_t}(v)$.

Assume that $e \in B_{z_{t+1}}(v) \setminus W_{z_{t+1}}(v)$. 
Since $e \notin W_{z_{t+1}}(v)$, 
$e \succ_v W_{z_{t+1}}(v) \succsim_v W_{z_t}(v)$. 
Thus, if $B_{z_t}(v) \succ_v B_{z_{t+1}}(v)$, then 
\begin{equation*}
B_{z_{t}}(v) \succ_v B_{z_{t+1}}(v) \succsim_v e \succ_v W_{z_{t}}(v).
\end{equation*}
This implies that 
we can 
assume that 
$B_{z_t}(v) \sim_v B_{z_{t+1}}(v)$.
Then 
$e \in B_{z_{t+1}}(v) \subseteq B_{z_t}(v)$.
\end{proof} 

\begin{lemma} \label{lemma:alg_iteration}
In the course of Algorithm~\ref{alg:main},
we have $T_{z_{t+1}} \neq T_{z_t}$. 
\end{lemma}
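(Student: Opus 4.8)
The plan is to show that the iteration makes strict progress: we enter an iteration of the while loop precisely because there is an edge $f_t \in E_{z_t}$ with $z_t(f_t) \neq 1$, and we should exploit the structure of the worst edge $g_t \in W_{z_t}(v_t)$ together with the choice of $z_{t+1}$ to exhibit an edge that lies in $T_{z_t}$ but drops out of $T_{z_{t+1}}$. By Lemma~\ref{lemma:alg_iteration_sub} we already have $T_{z_{t+1}} \subseteq T_{z_t}$, so it suffices to produce a single edge witnessing a strict containment. The natural candidate is $g_t$ itself, since the algorithm deliberately pushes $z_{t+1}(g_t)$ to an extreme value (either $1$ or $0$) relative to $z_t$.

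First I would analyze the two cases that lead to defining $z_{t+1}$. In the first case $a_t(g_t) = 1$, so $z_{t+1}(g_t) = 1$; combined with Lemma~\ref{lemma:partition} (which forces $z_{t+1}(E(v_t)) = 1$ for $v_t \in V_1$), this means $g_t$ is the \emph{unique} edge of $E_{z_{t+1}}(v_t)$, so $g_t \in B_{z_{t+1}}(v_t)$, i.e.\ $g_t$ becomes a best edge at $v_t$. In the second case $b_t(g_t) = 0$, so $z_{t+1}(g_t) = 0$ and hence $g_t \notin E_{z_{t+1}}$; I would then argue that $g_t$ can no longer satisfy the defining conditions of $T_{z_{t+1}}$ at $v_t$, because $g_t \notin E_{z_{t+1}}(v_t)$ rules out $g_t \in B_{z_{t+1}}(v_t) \cup W_{z_{t+1}}(v_t)$, and the strict betweenness condition would require $B_{z_{t+1}}(v_t) \succ_{v_t} g_t \succ_{v_t} W_{z_{t+1}}(v_t)$, which I must rule out using $g_t \in W_{z_t}(v_t)$ and the monotonicity $W_{z_{t+1}}(v_t) \succsim_{v_t} W_{z_t}(v_t)$ recorded just before Lemma~\ref{lemma:alg_iteration_sub}.

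The main obstacle, and the case I would spend the most care on, is the first case ($z_{t+1}(g_t)=1$): here $g_t \in E_{z_{t+1}}$ still, so $g_t$ remains in $T_{z_{t+1}}$ and cannot serve as the witness. Instead I would track the edge $f_t$, or more precisely the structure of $W_{z_t}(v_t)$ versus $B_{z_{t+1}}(v_t)$. Since we are in a loop iteration, $z_t(f_t) \neq 1$ with $f_t \in E_{z_t}$, so by Lemma~\ref{lemma:partition} there must be at least a second edge of positive $z_t$-weight at $v_t$, forcing $B_{z_t}(v_t)$ and $W_{z_t}(v_t)$ to be genuinely different tiers or $g_t$ to coexist with another edge; I would use this to show $W_{z_{t+1}}(v_t) \succ_{v_t} W_{z_t}(v_t)$ strictly, so that some edge $e$ with $g_t \succsim_{v_t} e$ (a worst-or-below edge for $z_t$) that previously satisfied the $T_{z_t}$ condition at $v_t$ now fails it at $z_{t+1}$. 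The delicate point is handling the defining conditions of $T$ symmetrically over \emph{both} endpoints of the candidate witness edge, since an edge leaves $T_{z_{t+1}}$ as soon as its condition fails at one endpoint; I would therefore localize the argument to the endpoint $v_t$ and confirm that the failure there is not rescued by the other endpoint's condition being vacuously satisfiable.

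Concretely, I would formalize the witness as $g_t$ in the second case and as a suitably chosen edge $e \in W_{z_t}(v_t)$ (possibly $g_t$ itself if $g_t \notin E_{z_{t+1}}$, otherwise an edge tied with $g_t$ under $\sim_{v_t}$ that drops out) in the first case, in each instance verifying $e \in T_{z_t} \setminus T_{z_{t+1}}$. Together with Lemma~\ref{lemma:alg_iteration_sub} this yields $T_{z_{t+1}} \subsetneq T_{z_t}$, hence $T_{z_{t+1}} \neq T_{z_t}$, which bounds the number of iterations by $|E|$ since the sets $T_{z_t}$ form a strictly decreasing chain of subsets of $E$.
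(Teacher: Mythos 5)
Your overall frame (invoke Lemma~\ref{lemma:alg_iteration_sub} and exhibit one edge in $T_{z_t} \setminus T_{z_{t+1}}$) is the paper's frame, and your Case 2 (where $z_{t+1} = b_t$ and $b_t(g_t) = 0$) is exactly the paper's argument: $g_t \in W_{z_t}(v_t) \subseteq T_{z_t}$, while $g_t \notin E_{z_{t+1}}$ rules out $g_t \in B_{z_{t+1}}(v_t) \cup W_{z_{t+1}}(v_t)$, and the betweenness condition would give $g_t \succ_{v_t} W_{z_{t+1}}(v_t)$, contradicting $W_{z_{t+1}}(v_t) \succsim_{v_t} W_{z_t}(v_t) \sim_{v_t} g_t$.

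Your Case 1 (where $z_{t+1} = a_t$ and $a_t(g_t) = 1$), however, has a genuine gap. Your plan is to prove the strict improvement $W_{z_{t+1}}(v_t) \succ_{v_t} W_{z_t}(v_t)$ and to take as witness a worst-or-below edge, ``an edge tied with $g_t$ under $\sim_{v_t}$ that drops out.'' Both steps fail. The strict improvement is false: as you yourself observe, $z_{t+1}(g_t) = 1$ makes $g_t$ the unique edge of $E_{z_{t+1}}(v_t)$, hence $W_{z_{t+1}}(v_t) = \{g_t\}$, and since $g_t \in W_{z_t}(v_t)$ this gives $W_{z_{t+1}}(v_t) \sim_{v_t} W_{z_t}(v_t)$; the worst tier at $v_t$ does not move at all. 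The fallback witness need not exist either: an edge strictly below $g_t$ at $v_t$ was never in $T_{z_t}$ (it already fails the betweenness condition for $z_t$ at $v_t$), and an edge tied with $g_t$ that lies in $T_{z_t}$ must belong to $W_{z_t}(v_t)$, which can be the singleton $\{g_t\}$ --- for instance $E_{z_t}(v_t) = \{f_t, g_t\}$ with $f_t \succ_{v_t} g_t$ and $z_t(f_t) = z_t(g_t) = 1/2$. The correct witness lives in the \emph{best} tier, not the worst: since $v_t \in V_1$ and $z_t(f_t) \neq 1$, Lemma~\ref{lemma:partition} gives $z_t(E(v_t)) = 1$, hence $|E_{z_t}(v_t)| \ge 2$, and so there exists an edge $e \in B_{z_t}(v_t)$ with $e \neq g_t$. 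Then $z_{t+1}(g_t) = 1$ forces $z_{t+1}(e) = 0$, so $e \notin E_{z_{t+1}}$, and if $e \in T_{z_{t+1}}$ we would need $B_{z_{t+1}}(v_t) \succ_{v_t} e \sim_{v_t} B_{z_t}(v_t)$, contradicting the monotonicity $B_{z_t}(v_t) \succsim_{v_t} B_{z_{t+1}}(v_t)$ (equivalently, contradicting $B_{z_{t+1}}(v_t) = \{g_t\}$ together with $e \succsim_{v_t} g_t$). Thus $e \in T_{z_t} \setminus T_{z_{t+1}}$, which is the witness your argument is missing.
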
 
\begin{proof}
We first consider the case where $z_{t+1} = a_t$. 
Since 
$v_t \in V_1$ and 
$z_t(f_t) \neq 1$, Lemma~\ref{lemma:partition} 
implies that 
there exists an edge $e \in B_{z_t}(v_t)$
such that $e \neq g_t$. 
Since $a_{t}(g_t) = 1$, 
we have $a_{t}(e) = 0$ (i.e., 
$e \notin E_{z_{t+1}}$). 
Thus, if $e \in T_{z_{t+1}}$, then 
$B_{z_{t+1}}(v_t) \succ_{v_t} e \sim_{v_t} B_{z_{t}}(v_t)$, which 
contradicts the fact that 
$B_{z_{t}}(v_t) \succsim_{v_t} B_{z_{t+1}}(v_t)$. 
This implies that $e \notin T_{z_{t+1}}$.
Since $e \in B_{z_t}(v_t) \subseteq T_{z_t}$, 
$T_{z_{t+1}} \neq T_{z_t}$. 

Next, we consider the case where 
$z_{t+1} = b_t$. 
Since $b_t(g_t) = 0$ (i.e., $g_t \notin E_{z_{t+1}}$), 
if $g_t \in T_{z+1}$, then 
$W_{z_t}(v_t) \sim_{v_t} g_t \succ_{v_t} W_{z_{t+1}}(v_t)$, which 
contradicts the fact that 
$W_{z_{t+1}}(v_t) \succsim_{v_t} W_{z_{t}}(v_t)$. 
Thus, 
$g_t \notin T_{z+1}$. 
Since $g_t \in W_{z_t}(v_t) \subseteq T_{z_t}$, 
$T_{z_{t+1}} \neq T_{z_t}$. 
\end{proof} 

Lemmas~\ref{lemma:alg_iteration_sub} and 
\ref{lemma:alg_iteration} imply that the number of 
iterations in Algorithm~\ref{alg:main} is $O(|E|)$. 
We can find $a_t,b_t$ by solving linear programs over
${\bf P}(z_t)$.
In addition, since it is known that 
the separation problem for \eqref{eq_3:constraint} can be solved 
in polynomial time~\cite{PadbergR82}, 
a linear program over 
${\bf P}(z_t)$ can be solved in polynomial time~\cite[Theorem~6.4.9]{GrotschelLS93}. 
Thus, Algorithm~\ref{alg:main} is a polynomial-time algorithm.  

\begin{theorem}
Algorithm~1 can correctly solve the problem of checking the existence of 
a strongly stable matching in $G$. 
\end{theorem}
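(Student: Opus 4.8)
The plan is to separate the two ways Algorithm~\ref{alg:main} can halt---outputting a vector $z_t$, or outputting \textbf{No}---and to show that each is correct. That the algorithm always halts, and in polynomial time, is already settled by the discussion following Lemmas~\ref{lemma:alg_iteration_sub} and~\ref{lemma:alg_iteration}: the sets $T_{z_t}$ strictly shrink, giving $O(|E|)$ iterations, each solving a constant number of linear programs over $\mathbf{P}(T_{z_t})$. So it suffices to prove (a) if the algorithm outputs some $z_t$, then $G$ has a strongly stable matching; and (b) if $G$ has a strongly stable matching, then the algorithm never outputs \textbf{No}. Since these are the only two exits, (a) and (b) together force the algorithm to produce a strongly stable matching exactly when one exists, and to output \textbf{No} exactly when none does.

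For (a), I would observe that the final output line is reached only after the \textbf{while} guard fails, which means every edge $e \in E_{z_t}$ satisfies $z_t(e) = 1$; since all other coordinates of $z_t$ vanish by the definition of $E_{z_t}$, we have $z_t \in \mathbf{P} \cap \{0,1\}^E$. Lemma~\ref{lemma:integral} then immediately yields that $\{e \in E \mid z_t(e) = 1\}$ is a strongly stable matching in $G$.

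For (b), the key is that, since $z_t \in \mathbf{P}$, Lemma~\ref{lemma:existence} guarantees---whenever a strongly stable matching exists---a strongly stable matching $\mu$ with $\chi_\mu \in \mathbf{P}(T_{z_t})$ at the start of each iteration. Because $\chi_\mu \in \{0,1\}^E$, the value $\chi_\mu(g_t)$ is either $0$ or $1$, and this single dichotomy drives the algorithm into one of the two ``safe'' branches. Indeed, if $\chi_\mu(g_t) = 1$, then $\chi_\mu$ is a feasible point of the maximization, so $a_t(g_t) \ge 1$, while \eqref{eq_1:constraint} applied to $v_t \in g_t$ forces $a_t(g_t) \le 1$; hence $a_t(g_t) = 1$ and the first branch is taken. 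If instead $\chi_\mu(g_t) = 0$, then $\chi_\mu$ witnesses $b_t(g_t) \le 0$, while nonnegativity gives $b_t(g_t) \ge 0$; hence $b_t(g_t) = 0$ and the second branch is taken. In neither case is \textbf{No} reached, which proves (b).

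Combining (a) and (b) gives the theorem: if a strongly stable matching exists, (b) forbids the \textbf{No} exit, so the algorithm must halt on the $z_t$-line and, by (a), return a strongly stable matching; if none exists, (a) forbids the $z_t$-line, so the algorithm must halt on \textbf{No}. I expect essentially all of the difficulty to live in Lemma~\ref{lemma:existence}, which is precisely the assertion that restricting to $\mathbf{P}(T_{z_t})$ never destroys a strongly stable certificate; once that is in hand, the present argument is just the short pinching above that forces the optimal coordinates $a_t(g_t)$ and $b_t(g_t)$ to the integers $1$ and $0$.
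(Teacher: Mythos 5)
Your proof is correct and follows essentially the same route as the paper: the $z_t$-output case is handled by Lemma~\ref{lemma:integral} exactly as in the paper, and your case (b) is just the contrapositive of the paper's \textbf{No}-case argument, using Lemma~\ref{lemma:existence} together with the integrality of $\chi_\mu(g_t)$ and the pinching by \eqref{eq_1:constraint} and nonnegativity. (The only nitpick: when $\chi_\mu(g_t)=0$ the algorithm might still take the first branch if $a_t(g_t)=1$, but your conclusion that \textbf{No} is never reached stands either way.)
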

\begin{proof}
Assume that 
Algorithm~\ref{alg:main} halts when $t = k$. 

First, we assume that Algorithm~\ref{alg:main} outputs 
$z_k$ in Step~17. 
Then $z_k \in \{0,1\}^E$. 
In addition, $z_k \in {\bf P}$.
Thus, by defining 
$\mu \coloneqq \{e \in E \mid z_k(e) = 1\}$, 
Lemma~\ref{lemma:integral}
implies that $\mu$ is a strongly stable matching in $G$.  

Next, we consider the case where 
Algorithm~\ref{alg:main} outputs 
{\bf No}. 
In this case, 
$0 < x(g_k) < 1$ holds
for every element $x \in {\bf P}(T_{z_k})$.
In order to derive a contradiction, 
we assume that there exists a strongly stable matching in $G$. 
Lemma~\ref{lemma:existence} 
implies that 
there exists a strongly stable matching $\mu$ in 
$G$ such that $\chi_{\mu} \in {\bf P}(T_{z_k})$. 
Since $\chi_{\mu}(g_k) \in \{0,1\}$, 
this is a contradiction. 
\end{proof} 

\subsection{Proof of Lemma~\ref{lemma:existence}}
\label{section:existence} 

In this subsection, 
we prove Lemma~\ref{lemma:existence}.

Throughout this subsection, 
let $\sigma$ be a strongly stable matching in $G$. 
Define $P$ (resp.\ $Q$) as the set of vertices $v \in V_1$ satisfying one of the 
following conditions. 
\begin{itemize}
\item
$W_x(v) \succ_v \sigma(v)$ (resp.\ $\sigma(v) \succ_v B_x(v)$). 
\item
$B_x(v) \succ_v W_x(v)$ and 
$W_x(v) \sim_v \sigma(v)$
(resp.\ $B_x(v) \succ_v W_x(v)$ and 
$B_x(v) \sim_v \sigma(v)$). 
\end{itemize}
Define $R$ as the set of vertices $v \in V_1$ such that 
$E_x(v)$ is flat and 
$E_x(v) \sim_v \sigma(v)$. 
Notice that the subsets 
$P,Q,R \subseteq V_1$ are pairwise disjoint.
Define 
$S \coloneqq V_1 \setminus (P \cup Q \cup R)$. 
Then for every vertex $v \in S$, 
$B_x(v) \succ_v \sigma(v) \succ_v W_x(v)$. 

Define $E^+$ by 
\begin{equation*}
E^+ \coloneqq \{e \in E \mid \exists z \in {\bf P} \colon e \in E_z\}
\subseteq E\langle V_1 \rangle. 
\end{equation*}
Notice that, for  
every vertex $v \in V_1$, 
since $W_x(v), B_x(v) \subseteq E_x$, 
we have 
$W_x(v), B_x(v) \subseteq E^+$. 
In addition,  
since $\chi_{\sigma} \in {\bf P}$
and $\chi_{\sigma}(e) > 0$ 
for every edge $e \in \sigma$, 
we have 
$\sigma \subseteq E^+ $. 

\begin{lemma} \label{lemma:sim}
Let $e = \{v,w\}$ be an edge in $E^+$. 
Then we have the following statements. 
\begin{description}
\item[(A1)]
$W_x(v) \succ_v e$ if and only if 
$e \succ_w B_x(w)$. 
\item[(A2)]
$W_x(v) \sim_v e$ if and only if 
$B_x(w) \sim_w e$. 
\item[(A3)] 
$E_x(v)$ is flat and $E_x(v) \sim_v e$ 
if and only if $E_x(w)$ is flat
and $E_x(w) \sim_w e$. 
\item[(A4)]
$e \in W_x(v)$ if and only if $e \in B_x(w)$. 
\end{description}
\end{lemma}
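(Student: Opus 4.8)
The plan is to apply Lemma~\ref{lemma:self_dual} to the fixed vector $x$ together with a witness for $e \in E^+$, and then reduce all four equivalences to three numerical identities describing how $x$ distributes mass around $e$. Since $e \in E^+$, there is some $z \in {\bf P}$ with $e \in E_z$; applying Lemma~\ref{lemma:self_dual} to the pair $x,z$ and the edge $e$ yields (S1), (S2), (S3). Abbreviating, for each $u \in e$,
\[
a_u \coloneqq x(E[\succ_u e]), \qquad b_u \coloneqq x(E[\sim_u e]), \qquad c_u \coloneqq x(E[e \succ_u]),
\]
the completeness of $\succsim_u$ gives the disjoint decomposition $E(u) = E[\succ_u e] \cup E[\sim_u e] \cup E[e \succ_u]$, so (S1) becomes $a_u + b_u + c_u = 1$. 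Subtracting this from the two instances of (S2) (for $v$ and for $w$) produces $c_v = a_w$ and $c_w = a_v$, while (S3) is exactly $b_v = b_w$. These three identities, which are manifestly symmetric in $v$ and $w$, will drive the whole argument.

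The next step is to translate the preference statements into conditions on $a_u, b_u, c_u$. The dictionary I would use is: $W_x(v) \succ_v e$ means every support edge of $v$ beats $e$, i.e.\ $a_v = 1$; $W_x(v) \sim_v e$ means the worst support edge is tied with $e$, i.e.\ $c_v = 0$ and $b_v > 0$; ``$E_x(v)$ is flat and $E_x(v) \sim_v e$'' means all support mass of $v$ sits on edges tied with $e$, i.e.\ $b_v = 1$; and $e \in W_x(v)$ adds to the second case the requirement $x(e) > 0$. Granting this dictionary, each statement is immediate. For (A1): $a_v = 1$ forces $b_v = c_v = 0$, hence $a_w = c_v = 0$ and $b_w = b_v = 0$, so $c_w = 1$, which is exactly $e \succ_w B_x(w)$. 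For (A3): $b_v = 1$ forces $a_v = c_v = 0$, hence $a_w = c_w = 0$ and $b_w = 1$. For (A2): $c_v = 0$ and $b_v > 0$ give $a_w = 0$ and $b_w > 0$; the first places $B_x(w) \preceq_w e$, while the positive tied mass exhibits a support edge $\sim_w e$ dominated by $B_x(w)$, forcing $B_x(w) \succsim_w e$ and hence $B_x(w) \sim_w e$. Finally (A4) follows from (A2): since $x(e) > 0$ is a condition on the single edge $e$, it is symmetric in $v$ and $w$, so $e \in W_x(v)$ gives $W_x(v) \sim_v e$ and $x(e) > 0$, whence $B_x(w) \sim_w e$ by (A2) and $e \in B_x(w)$ because $e$ is a support edge tied with the best. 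In every case the reverse implication is obtained by exchanging the roles of $v$ and $w$, using the symmetry of the three identities.

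The main obstacle is not the algebra but pinning down the dictionary correctly, in particular justifying that ``$W_x(v) \succ_v e$'' really places all of the unit support mass strictly above $e$ (so $a_v = 1$), which uses both that $W_x(v)$ collects the worst edges of the support and that $x(E(v)) = 1$ by (S1). Care is also needed in (A2)/(A4) to distinguish ``there is positive mass tied with $e$'' ($b_v > 0$) from ``$e$ itself carries mass'' ($x(e) > 0$), since only the latter certifies the membership $e \in B_x(w)$. Once these points are settled, (A1)--(A4) reduce to one-line consequences of $c_v = a_w$, $c_w = a_v$, and $b_v = b_w$.
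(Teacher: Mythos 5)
Your proof is correct and takes essentially the same route as the paper: fix a witness $z \in {\bf P}$ with $e \in E_z$, apply Lemma~\ref{lemma:self_dual} to the pair $x,z$, and translate each preference statement into a statement about how the unit mass $x(E(u)) = 1$ splits among $E[\succ_u e]$, $E[\sim_u e]$, and $E[e \succ_u]$ for each $u \in e$. The only difference is bookkeeping: the paper works directly with the two instances of (S2), proving (A1) first and then deriving (A2), (A3), (A4) from it without ever needing (S3), whereas you also invoke (S1) and (S3) to package everything into the three symmetric identities $c_v = a_w$, $c_w = a_v$, $b_v = b_w$ and read all four statements off a dictionary --- both are sound.
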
 
\begin{proof}
Since $e \in E^+$, 
(S2) of Lemma~\ref{lemma:self_dual} implies that 
\begin{equation} \label{eq_1:lemma:sim} 
x(E[\succsim_v e]) + x(E[\succ_w e]) = 1, \ \ 
x(E[\succ_v e]) + x(E[\succsim_w e]) = 1. 
\end{equation}

{\bf (A1)} 
The second equation of \eqref{eq_1:lemma:sim} 
implies that 
$x(E[\succ_v e]) = 1$ if and only if
$x(E[\succsim_w e]) = 0$. 
This implies that 
$W_x(v) \succ_v e$ if and only if 
$e \succ_w B_x(w)$. 

{\bf (A2)} 
The first equation of 
\eqref{eq_1:lemma:sim} 
implies that 
$x(E[\succsim_v e]) = 1$ if and only if
$x(E[\succ_w e]) = 0$. 
This implies that 
$W_x(v) \succsim_v e$ if and only if 
$e \succsim_w B_x(w)$. 
Thus, (A1) implies this statement.

{\bf (A3)}
Assume that 
$E_x(v)$ is flat and $E_x(v) \sim_v e$. 
Then since 
$W_x(v) \sim_v e$ and 
$B_x(v) \sim_v e$, 
(A2) implies that 
$W_x(w) \sim_w e$ and 
$B_x(w) \sim_w e$.
This implies that $E_x(w)$ is flat. 
Furthermore, since 
$e \sim_v E_x(v) \sim_v W_x(v)$, 
(A2) implies that 
$e \sim_w B_x(w) \sim_w E_x(w)$.  

{\bf (A4)} 
Since $e \in E_x(v)$ if and only if 
$e \in E_x(w)$. 
Thus, (A2) implies this statement.
\end{proof} 

\begin{lemma} \label{lemma:sigma}
Let $e = \{v,w\}$ be an edge in $\sigma$. 
Then we have the following statements. 
\begin{description}
\item[(B1)]
$v \in P$ if and only if $w \in Q$.
\item[(B2)]
$e \subseteq R$ or $e \cap R = \emptyset$. 
\end{description} 
\end{lemma}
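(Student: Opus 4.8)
The plan is to derive both statements directly from Lemma~\ref{lemma:sim} applied to the edge $e = \{v,w\}$. This is legitimate because $e \in \sigma \subseteq E^+$, so (A1)--(A4) are available for this $e$; moreover $e \in \sigma$ means $\sigma(v) = \sigma(w) = e$, so the defining conditions of $P$, $Q$, $R$ at $v$ and at $w$ can be read off with $\sigma(v)$ and $\sigma(w)$ both replaced by $e$. The whole proof is then a translation of those defining conditions across the edge using (A1)--(A3).

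Statement (B2) should fall out immediately. By definition $v \in R$ means $E_x(v)$ is flat and $E_x(v) \sim_v e$, while $w \in R$ means $E_x(w)$ is flat and $E_x(w) \sim_w e$. These are exactly the two sides of the biconditional (A3), so $v \in R$ if and only if $w \in R$, which is precisely the assertion that $e \subseteq R$ or $e \cap R = \emptyset$.

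For (B1) I would prove the two implications $v \in P \Rightarrow w \in Q$ and $w \in Q \Rightarrow v \in P$ separately, splitting each according to the two clauses in the definitions of $P$ and $Q$. The first clause transfers cleanly: $W_x(v) \succ_v e$ is equivalent to $e \succ_w B_x(w)$ by (A1), which is exactly the first clause of $Q$, and running (A1) the other way handles the reverse implication. For the second clause, (A2) converts $W_x(v) \sim_v e$ into $B_x(w) \sim_w e$ and conversely, supplying the tie half of the second clause of the target set; the remaining half is the non-flatness condition $B_x(w) \succ_w W_x(w)$ (resp. $B_x(v) \succ_v W_x(v)$).

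The main obstacle is precisely transferring this non-flatness condition, and this is where (A3) enters in contrapositive form. Starting from $v \in P$ via its second clause, I have $B_x(v) \succ_v W_x(v)$ and, after (A2), $B_x(w) \sim_w e$; to land in the second clause of $Q$ I must still establish $B_x(w) \succ_w W_x(w)$. I would argue by contradiction: if $E_x(w)$ were flat, then $E_x(w) \sim_w B_x(w) \sim_w e$, so $E_x(w)$ is flat and $E_x(w) \sim_w e$, and (A3) would then force $E_x(v)$ to be flat, contradicting $B_x(v) \succ_v W_x(v)$. The reverse implication $w \in Q \Rightarrow v \in P$ is handled symmetrically, applying Lemma~\ref{lemma:sim} with the roles of the two endpoints exchanged and the roles of $B_x$ and $W_x$ interchanged. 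Once this non-flatness transfer is secured, each subcase lands exactly on the matching clause of the target set, completing (B1).
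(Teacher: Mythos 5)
Your proposal is correct and follows essentially the same route as the paper: (B2) is read off directly from (A3) using $\sigma(v) = \sigma(w) = e$, and (B1) is proved clause by clause via (A1), (A2), and the contrapositive use of (A3) to transfer the non-flatness condition $B_x(v) \succ_v W_x(v)$ across the edge, with the reverse implication handled by the symmetric argument. The paper's proof is exactly this, only stated more tersely.
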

\begin{proof}
Recall that since $\chi_{\sigma} \in {\bf P}$, we have 
$e \in E^+$. 

{\bf (B1)}
Assume that $v \in P$. 
If $W_x(v) \succ_v e$, then 
(A1) of Lemma~\ref{lemma:sim}
implies that 
$e \succ_w B_x(w)$.
Thus, $w \in Q$. 
Assume that 
$B_x(v) \succ_v W_x(v)$ and 
$W_x(v) \sim_v e$. 
Then 
(A2) of Lemma~\ref{lemma:sim}
implies that 
$B_x(w) \sim_w e$.
Thus, if 
$E_x(w)$ is flat, then 
(A3) of Lemma~\ref{lemma:sim}
implies 
that $E_x(v)$ is flat. 
This contradicts the fact that 
$B_x(v) \succ_v W_x(v)$. 
Thus, 
$B_x(w) \succ_w W_x(w)$. 
This implies that $w \in Q$. 

In addition, the opposite 
direction 
can be proved in the same way. 

{\bf (B2)}
This statement follows from (A3) of Lemma~\ref{lemma:sim}. 
\end{proof}

Lemma~\ref{lemma:sigma} implies that, 
for each edge $e \in \sigma$, 
we have exactly one of 
(i) 
$|e \cap P| = |e \cap Q| = 1$, 
(ii)
$e \subseteq R$,
(iii)
$e \subseteq S$.

\begin{lemma} \label{lemma:bipartite_r} 
Let $v$ be a vertex in $R$, and let $e = \{v,w\}$ be 
an edge in $E_x(v)$. 
Then $w \in R$. 
\end{lemma}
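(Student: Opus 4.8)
The plan is to reduce the whole statement to a single comparison at $w$, settle the easy side by a blocking argument, and isolate one genuinely global case. First I would dispose of the trivial possibility $e\in\sigma$: then $e=\sigma(w)$, and $E_x(w)\sim_w\sigma(w)$ is immediate from flatness, so I may assume $e\notin\sigma$. Since $x(e)>0$ we have $w\in V_1$, and Lemma~\ref{lemma:partition} gives $x(E(w))=1$, so $\sigma(w)\neq\emptyset$. Because $v\in R$ the set $E_x(v)$ is flat with $E_x(v)\sim_v\sigma(v)$, and as $e\in E_x(v)$ this gives $E_x(v)\sim_v e$. Since $e\in E_x\subseteq E^+$, statement (A3) of Lemma~\ref{lemma:sim} then yields that $E_x(w)$ is flat and $E_x(w)\sim_w e$. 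Thus the lemma reduces to proving $e\sim_w\sigma(w)$: flatness of $E_x(w)$ upgrades this to $E_x(w)\sim_w\sigma(w)$, i.e. $w\in R$.

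Because $E_x(w)$ is flat and $e\in E_x(w)=B_x(w)=W_x(w)$, flatness collapses the second defining clauses of $P$ and $Q$ and excludes $w\in S$, so exactly one of $e\succ_w\sigma(w)$ (then $w\in P$), $e\sim_w\sigma(w)$ (then $w\in R$), or $\sigma(w)\succ_w e$ (then $w\in Q$) holds. Ruling out $w\in P$ is elementary: if $e\succ_w\sigma(w)$ then, together with $e\sim_v\sigma(v)$ (hence $e\succsim_v\sigma(v)$), the edge $e\notin\sigma$ weakly blocks $\sigma$ on $v$ and strongly blocks it on $w$, so $e$ blocks $\sigma$, contradicting strong stability. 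Hence $\sigma(w)\succsim_w e$, and only the case $\sigma(w)\succ_w e$ remains.

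The hard case is $\sigma(w)\succ_w e$, i.e. $w\in Q$, and this is where I expect the real work to lie, since now neither $e$ nor any support edge at $v$ or $w$ need block $\sigma$, so the contradiction must be global. I would first extract structure from the self-dual lemma. Applying (S2) of Lemma~\ref{lemma:self_dual} at $e\in E^+$ and using that $E_x(v)$ is flat with $E_x(v)\sim_v e$ gives $x(E[\succ_v e])=0$ and $x(E[\sim_v e])=x(E(v))=1$, whence $x(E[\succ_w e])=0$. Applying (S2) instead to the edge $\sigma(w)=\{w,u\}\in E^+$, and noting that every support edge at $w$ is $\sim_w e\prec_w\sigma(w)$, forces $x(E[\sim_w\sigma(w)])=x(E[\succ_w\sigma(w)])=0$ and therefore $x(E[\succ_u\sigma(w)])=1$. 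In other words every edge of $E_x(u)$ is strictly preferred to $\sigma(u)$ by $u$, so $W_x(u)\succ_u\sigma(u)$. For each such support edge $\{u,u'\}$, strong stability of $\sigma$ (it strongly blocks on $u$, hence cannot weakly block on $u'$) gives $\sigma(u')\succ_{u'}\{u,u'\}$. Propagating this alternating domination along the edges of $E_x$ and $\sigma$ builds an odd structure on which $x$ would have to carry more matching mass than an integral matching admits; I would derive the contradiction from the odd-set inequalities \eqref{eq_3:constraint}. This is exactly the point the authors flag as failing for the bipartite linear program, which is why \eqref{eq_3:constraint} must be present, so I expect the exclusion of $w\in Q$ to be the main obstacle.

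Combining the three cases leaves only $e\sim_w\sigma(w)$, and together with flatness of $E_x(w)$ this gives $E_x(w)\sim_w\sigma(w)$, i.e. $w\in R$, completing the argument. The routine parts are the reduction via (A3) and the blocking argument excluding $w\in P$; the delicate part is the global counting used to exclude $w\in Q$, where the self-duality identities (S2) force all of $u$'s support strictly above $\sigma(u)$ and \eqref{eq_3:constraint} supplies the parity contradiction.
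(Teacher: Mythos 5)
Your reduction via (A3) and your exclusion of $w \in P$ by a blocking argument are both correct, but the proposal has a genuine gap: the case $\sigma(w) \succ_w e$ (your ``hard case'') is never actually proved impossible. What you offer there is a sketch---propagate an ``alternating domination'' along edges of $E_x$ and $\sigma$, obtain ``an odd structure,'' and contradict \eqref{eq_3:constraint}---with no construction of that structure, no argument that the propagation terminates, and no counting inequality. By your own division of cases this is where all the content lies, so as it stands the lemma is not proved.

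The idea you are missing is that Lemma~\ref{lemma:self_dual} is a statement about an arbitrary \emph{pair} of vectors satisfying \eqref{eq_1:constraint} and \eqref{eq_2:constraint}, and the paper applies (S2) with the roles assigned in the way you never try: $\chi_{\sigma}$ (which satisfies these constraints by Lemma~\ref{lemma:polytope}) is the vector being evaluated, while $e \in E_x$ supplies the support edge. This gives
\[
\chi_{\sigma}(E[\succsim_v e]) + \chi_{\sigma}(E[\succ_w e]) = 1, \qquad
\chi_{\sigma}(E[\succ_v e]) + \chi_{\sigma}(E[\succsim_w e]) = 1 .
\]
Since $v \in R$ gives $\sigma(v) \sim_v e$, the first equation forces $\chi_{\sigma}(E[\succ_w e]) = 0$, and then the second forces $\chi_{\sigma}(E[\succsim_w e]) = 1$; hence $\sigma(w) \neq \emptyset$ and $e \sim_w \sigma(w)$, which eliminates your $P$ and $Q$ cases simultaneously, with no blocking argument and no case split at all. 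In your proposal, (S2) is only ever applied with $x$ as the measure, which is exactly why the $Q$ case looks irreducibly global to you. Finally, your guess about \eqref{eq_3:constraint} mislocates its role: the paper's proof of this lemma uses only \eqref{eq_1:constraint} and \eqref{eq_2:constraint}, so the lemma already holds for the bipartite-style polytope; the odd-set inequalities enter later, in the proof of Lemma~\ref{lemma:existence}, where they guarantee (via Edmonds' matching polytope) that the fractional vector $z_K$ on the non-bipartite graph $G_K$ can be replaced by an integral matching covering $R$.
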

\begin{proof}
Since $v \in R$, 
$E_x(v)$ is flat. 
Since $e \in E_x(v)$, 
$E_x(v) \sim_v e$. 
Thus, since $e \in E_x \subseteq E^+$, 
(A3) of Lemma~\ref{lemma:sim} 
implies that 
$E_x(w)$ is flat. 
Since $\chi_{\sigma} \in {\bf P}$ and $e \in E_x$, 
(S2) of Lemma~\ref{lemma:self_dual}
implies that 
\begin{equation*}
\chi_{\sigma}(E[\succsim_v e]) + \chi_{\sigma}(E[\succ_w e]) = 1, \ \ \ 
\chi_{\sigma}(E[\succ_v e]) + \chi_{\sigma}(E[\succsim_w e]) = 1. 
\end{equation*}
Since $e \sim_v \sigma(v)$ follows from 
$v \in R$, 
we have $\chi_{\sigma}(E[\succsim_v e]) = 1$. 
This implies that 
$\chi_{\sigma}(E[\succ_w e]) = 0$. 
Thus, $e \succsim_w \sigma(w)$. 
Furthermore, since $\chi_{\sigma}(E[\succ_v e]) = 0$, 
$\chi_{\sigma}(E[\succsim_w e]) = 1$.
This implies that 
$\sigma(w) \succsim_w e$. 
Thus, $e \sim_w \sigma(w)$. 
Since $e \in E_x(w)$, $E_x(w) \sim_w \sigma(w)$.
Thus, $w \in R$. 
\end{proof} 

\begin{lemma} \label{lemma:bipartite_p} 
Let $v$ be a vertex in $P$, and let $e = \{v,w\}$ be 
an edge in $W_x(v)$. 
Then $w \in Q$. 
\end{lemma}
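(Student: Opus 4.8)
The plan is to mirror the proof of Lemma~\ref{lemma:bipartite_r}, combining the structural equivalences of Lemma~\ref{lemma:sim} with the self-duality equalities of Lemma~\ref{lemma:self_dual}. First I would record the basic facts: since $e \in W_x(v) \subseteq E_x \subseteq E^+$, both endpoints lie in $V_1$, so in particular $w \in V_1$ and, by Lemma~\ref{lemma:partition}, $\sigma(w)$ is a genuine edge of $\sigma$. Applying (A4) of Lemma~\ref{lemma:sim} to $e$ turns the hypothesis $e \in W_x(v)$ into $e \in B_x(w)$, so that $B_x(w) \sim_w e$. This is the key bridge that lets me phrase the target membership $w \in Q$ entirely in terms of how $\sigma(w)$ compares with $e$.

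Next I would split on the two clauses defining $v \in P$, which (as $e \in W_x(v)$) correspond exactly to $e \succ_v \sigma(v)$ and $e \sim_v \sigma(v)$. Since $\chi_{\sigma} \in {\bf P}$ and $e \in E^+$, (S2) of Lemma~\ref{lemma:self_dual} supplies the two equalities $\chi_{\sigma}(E[\succsim_v e]) + \chi_{\sigma}(E[\succ_w e]) = 1$ and $\chi_{\sigma}(E[\succ_v e]) + \chi_{\sigma}(E[\succsim_w e]) = 1$, exactly as in Lemma~\ref{lemma:bipartite_r}; each summand lies in $\{0,1\}$ and equals $1$ precisely when the relation between $\sigma(v)$ (resp.\ $\sigma(w)$) and $e$ holds. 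In the first clause, $e \succ_v \sigma(v)$ forces $\chi_{\sigma}(E[\succsim_v e]) = 0$, so the first equality yields $\sigma(w) \succ_w e$; since $e \in B_x(w)$ this reads $\sigma(w) \succ_w B_x(w)$, which is precisely the first defining condition of $Q$.

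In the second clause, $e \sim_v \sigma(v)$ gives $\chi_{\sigma}(E[\succsim_v e]) = 1$ and $\chi_{\sigma}(E[\succ_v e]) = 0$; substituting these into the two equalities forces both $\chi_{\sigma}(E[\succ_w e]) = 0$ and $\chi_{\sigma}(E[\succsim_w e]) = 1$, i.e.\ $\sigma(w) \sim_w e$, and hence $B_x(w) \sim_w \sigma(w)$. To land in the second defining condition of $Q$ I still need $B_x(w) \succ_w W_x(w)$. Here I would argue by contradiction: if $E_x(w)$ were flat, then $E_x(w) \sim_w e$, and (A3) of Lemma~\ref{lemma:sim} would force $E_x(v)$ to be flat, contradicting the hypothesis $B_x(v) \succ_v W_x(v)$ carried by this clause. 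Therefore $B_x(w) \succ_w W_x(w)$, and $w \in Q$.

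I expect the main obstacle to be this second clause: unlike the first, the self-duality equalities only pin down $\sigma(w) \sim_w e$ rather than a strict preference, so one cannot read off membership in $Q$ directly and must additionally exclude the flat case for $E_x(w)$ through (A3). (Equivalently, one could replace the appeals to Lemma~\ref{lemma:self_dual} by a direct strong-stability argument: an edge $e$ with $e \succsim_v \sigma(v)$ cannot block $\sigma$, so it cannot strongly block on $w$, giving $\sigma(w) \succsim_w e$, with strictness inherited in the first clause.)
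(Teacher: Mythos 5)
Your proof is correct and follows essentially the same route as the paper's: the (A4) bridge turning $e \in W_x(v)$ into $e \in B_x(w)$, the two (S2) equalities for $\chi_{\sigma}$, the case split on the two clauses defining $P$ (yielding $\sigma(w) \succ_w B_x(w)$ in the first and $\sigma(w) \sim_w e$, hence $B_x(w) \sim_w \sigma(w)$, in the second), and the (A3) flatness contradiction to secure $B_x(w) \succ_w W_x(w)$ — only the order of steps within the second clause differs, which is immaterial. One small caution about your parenthetical alternative: replacing (S2) by direct strong stability gives only $\sigma(w) \succsim_w e$ in the second clause, so that route needs the extra observation that the strict case $\sigma(w) \succ_w e$ still puts $w \in Q$ via the first defining condition of $Q$.
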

\begin{proof}
Since $e \in E^+$ and $e \in W_x(v)$, 
(A4) of 
Lemma~\ref{lemma:sim} implies that 
$e \in B_x(w)$. 
Furthermore, 
since $\chi_{\sigma} \in {\bf P}$ and $e \in E_x$, 
(S2) of Lemma~\ref{lemma:self_dual}
implies that 
\begin{equation*}
\chi_{\sigma}(E[\succsim_v e]) + \chi_{\sigma}(E[\succ_w e]) = 1, \ \ \ 
\chi_{\sigma}(E[\succ_v e]) + \chi_{\sigma}(E[\succsim_w e]) = 1. 
\end{equation*}

Assume that $W_x(v) \succ_v \sigma(v)$.
Then since $e \in W_x(v)$, 
$\chi_{\sigma}(E[\succsim_v e]) = 0$. 
Thus, $\chi_{\sigma}(E[\succ_w e]) = 1$. 
Since $e \in B_x(w)$, 
this implies that $\sigma(w) \succ_w B_x(w)$. 

Assume that 
$B_x(v) \succ_v W_x(v)$ and 
$W_x(v) \sim_v \sigma(v)$. 
Since $e \in B_x(w)$, 
$B_x(w) \sim_w e$. 
Thus, 
if $E_x(w)$ is flat, then 
since $e \in W_x(v) \subseteq E^+$, 
(A3) of 
Lemma~\ref{lemma:sim} implies that 
$E_x(v)$ is flat.
However, this contradicts the fact that 
$B_x(v) \succ_v W_x(v)$.
Thus, $B_x(w) \succ_w W_x(w)$.
Since $e \sim_v \sigma(v)$ follows from 
$e \in W_x(v)$, 
$\chi_{\sigma}(E[\succsim_v e]) = 1$. 
Thus, $\chi_{\sigma}(E[\succ_w e]) = 0$. 
This implies that 
$e \succsim_w \sigma(w)$. 
Since 
$\chi_{\sigma}(E[\succ_v e]) = 0$, 
$\chi_{\sigma}(E[\succsim_w e]) = 1$.
This implies that  
$\sigma(w) \succsim_w e$.
Thus, since $e \succsim_w \sigma(w)$, 
we have $\sigma(w) \sim_w e$.
Since $e \in B_x(w)$, 
$B_x(w) \sim_w \sigma(w)$.
Thus, $w \in Q$. 
\end{proof}

\begin{lemma} \label{lemma:bipartite_q} 
Let $v$ be a vertex in $Q$, and let $e = \{v,w\}$ be 
an edge in $B_x(v)$. 
Then $w \in P$. 
\end{lemma}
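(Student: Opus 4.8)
The statement is the symmetric counterpart of Lemma~\ref{lemma:bipartite_p}, obtained by interchanging the roles of best and worst edges and of $P$ and $Q$, so the plan is to run the same argument with these roles swapped. First I would record the structural facts common to both cases. Since $e \in E_x \subseteq E^+$, both endpoints of $e$ lie in $V_1$, so Lemma~\ref{lemma:partition} applied to $\chi_{\sigma}$ shows that $w$ is matched by $\sigma$; moreover (A4) of Lemma~\ref{lemma:sim} converts $e \in B_x(v)$ into $e \in W_x(w)$. Finally, since $\chi_{\sigma} \in {\bf P}$ and $e \in E_x$, (S2) of Lemma~\ref{lemma:self_dual} gives the two tight equalities
\begin{equation*}
\chi_{\sigma}(E[\succsim_v e]) + \chi_{\sigma}(E[\succ_w e]) = 1, \qquad
\chi_{\sigma}(E[\succ_v e]) + \chi_{\sigma}(E[\succsim_w e]) = 1,
\end{equation*}
which drive both cases. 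I would then split according to the two defining conditions for $v \in Q$.

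In the first case, where $\sigma(v) \succ_v B_x(v)$, I have $\sigma(v) \succ_v e$ because $e \in B_x(v)$, hence $\chi_{\sigma}(E[\succ_v e]) = 1$; the second equality then forces $\chi_{\sigma}(E[\succsim_w e]) = 0$, so, as $w$ is matched, $e \succ_w \sigma(w)$. Since $e \in W_x(w)$ and $W_x(w)$ is flat, this reads $W_x(w) \succ_w \sigma(w)$, which is exactly the first defining condition for $w \in P$.

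In the second case, where $B_x(v) \succ_v W_x(v)$ and $B_x(v) \sim_v \sigma(v)$, I have $e \sim_v \sigma(v)$, hence $\chi_{\sigma}(E[\succsim_v e]) = 1$ and $\chi_{\sigma}(E[\succ_v e]) = 0$. The first equality gives $\chi_{\sigma}(E[\succ_w e]) = 0$, i.e.\ $e \succsim_w \sigma(w)$, while the second gives $\chi_{\sigma}(E[\succsim_w e]) = 1$, i.e.\ $\sigma(w) \succsim_w e$; together these yield $e \sim_w \sigma(w)$, and since $e \in W_x(w)$ this is $W_x(w) \sim_w \sigma(w)$. It then remains to show that $E_x(w)$ is not flat: if it were, then $E_x(w) \sim_w e$ and (A3) of Lemma~\ref{lemma:sim} would force $E_x(v)$ to be flat, contradicting $B_x(v) \succ_v W_x(v)$; hence $B_x(w) \succ_w W_x(w)$, and this together with $W_x(w) \sim_w \sigma(w)$ is the second defining condition for $w \in P$.

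The whole argument is entirely parallel to that of Lemma~\ref{lemma:bipartite_p}, and I expect the only delicate point to be the non-flatness step in the second case, where the transfer of flatness across $e$ via (A3) must be combined with the strict inequality $B_x(v) \succ_v W_x(v)$ to reach a contradiction; the remainder is routine bookkeeping with the complementary-slackness equalities together with the fact that $w$ is matched by $\sigma$.
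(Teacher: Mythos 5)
Your proof is correct and takes essentially the same route as the paper's: convert $e \in B_x(v)$ into $e \in W_x(w)$ via (A4), invoke the two (S2) equalities for $\chi_{\sigma}$, split on the two defining conditions of $Q$, and in the second case use (A3) to rule out flatness of $E_x(w)$ and thereby obtain $B_x(w) \succ_w W_x(w)$. The only cosmetic difference is that you explicitly invoke Lemma~\ref{lemma:partition} to guarantee $\sigma(w) \neq \emptyset$, a step the paper leaves implicit since the conventions on $\succ_w$ over $E(w) \cup \{\emptyset\}$ make the argument go through even when $\sigma(w) = \emptyset$.
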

\begin{proof}
Since $e \in E^+$ and $e \in B_x(v)$, 
(A4) of 
Lemma~\ref{lemma:sim} implies that 
$e \in W_x(w)$. 
Furthermore, 
(S2) of Lemma~\ref{lemma:self_dual}
implies that 
\begin{equation*}
\chi_{\sigma}(E[\succsim_v e]) + \chi_{\sigma}(E[\succ_w e]) = 1, \ \ \ 
\chi_{\sigma}(E[\succ_v e]) + \chi_{\sigma}(E[\succsim_w e]) = 1. 
\end{equation*}

Assume that $\sigma(v) \succ_v B_x(v)$.
Then since $e \in B_x(v)$, 
$\chi_{\sigma}(E[\succ_v e]) = 1$. 
Thus, $\chi_{\sigma}(E[\succsim_w e]) = 0$. 
Since $e \in W_x(w)$, 
this implies that $W_x(w) \succ_w \sigma(w)$. 

Assume that 
$B_x(v) \succ_v W_x(v)$ and 
$B_x(v) \sim_v \sigma(v)$. 
Since $e \in W_x(w)$, 
$W_x(w) \sim_w e$. 
Thus, 
if $E_x(w)$ is flat, then 
since $e \in B_x(v) \subseteq E^+$, 
(A3) of 
Lemma~\ref{lemma:sim} implies that 
$E_x(v)$ is flat.
However, this contradicts the fact that 
$B_x(v) \succ_v W_x(v)$.
Thus, $B_x(w) \succ_w W_x(w)$.
Since $e \sim_v \sigma(v)$ follows from 
$e \in B_x(v)$, 
$\chi_{\sigma}(E[\succsim_v e]) = 1$. 
Thus, $\chi_{\sigma}(E[\succ_w e]) = 0$. 
This implies that 
$e \succsim_w \sigma(w)$. 
Since 
$\chi_{\sigma}(E[\succ_v e]) = 0$, 
$\chi_{\sigma}(E[\succsim_w e]) = 1$.
This implies that  
$\sigma(w) \succsim_w e$.
Thus, since $e \succsim_w \sigma(w)$, 
we have $\sigma(w) \sim_w e$.
Since $e \in W_x(w)$, 
$W_x(w) \sim_w \sigma(w)$.
Thus, $w \in P$. 
\end{proof}

Define $L \coloneqq \bigcup_{v \in P}W_x(v)$. 
Define the subgraph $G_L$ of $G$ by 
$G_L \coloneqq  (P \cup Q, L)$. 
Notice that Lemma~\ref{lemma:bipartite_p} 
implies that
$G_L$ is well-defined and  
a bipartite graph. 
In addition, 
(A4) of Lemma~\ref{lemma:sim} and 
Lemma~\ref{lemma:bipartite_q} imply that 
$L = \bigcup_{v \in Q}B_x(v)$. 
Define the vector $z_L \in \mathbb{R}_+^{L}$  
by 
\begin{equation*}
z_L(e) \coloneqq \dfrac{x(e)}{x(W_x(v))}, 
\end{equation*}
where we assume that $e \cap P \coloneqq \{v\}$. 

\begin{lemma} \label{lemma:l_1_equal}
Let $e = \{v,w\}$ be an edge in $L$. 
Assume that $w \in Q$. 
Then 
\begin{equation*}
z_L(e) = \dfrac{x(e)}{x(B_x(w))}.
\end{equation*}
\end{lemma}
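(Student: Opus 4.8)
The plan is to reduce the claimed identity to the single equality $x(W_x(v)) = x(B_x(w))$, since both fractions have the same numerator $x(e)$. Note first that $e \in L \subseteq E_x$, so $x(e) > 0$ and both denominators are positive; moreover, because $G_L$ is a bipartite graph with parts $P$ and $Q$ and $e \in L$, the endpoint not equal to $w$ lies in $P$, so $v \in P$ and in fact $e \in W_x(v)$ (the $P$-endpoint of $e$ is exactly the $v$ appearing in the definition of $z_L$). Since $e \in E_x \subseteq E^+$, I would then apply (A4) of Lemma~\ref{lemma:sim} to deduce $e \in B_x(w)$. Thus $e$ is simultaneously a worst edge of $v$ and a best edge of $w$ within the respective supports, and both $W_x(v)$ and $B_x(w)$ are flat.

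The key step is to rewrite each denominator as the $x$-mass of a $\sim$-equivalence class. Because $e$ is a worst edge of $v$ in $E_x(v)$, every $f \in E_x(v)$ satisfies $f \succsim_v e$; hence any $f \in E_x(v)$ with $f \sim_v e$ is again a worst edge by transitivity, giving $W_x(v) = E_x(v) \cap E[\sim_v e]$. Since $x$ vanishes outside $E_x$, the edges in $E[\sim_v e] \setminus E_x$ contribute nothing, so $x(W_x(v)) = x(E[\sim_v e])$. The symmetric argument at $w$, now using that $e \in B_x(w)$ is a best edge, yields $B_x(w) = E_x(w) \cap E[\sim_w e]$ and therefore $x(B_x(w)) = x(E[\sim_w e])$.

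Finally I would invoke (S3) of Lemma~\ref{lemma:self_dual} applied to the pair $(x,x)$ and the edge $e \in E_x$, which gives $x(E[\sim_v e]) = x(E[\sim_w e])$. Chaining the three equalities produces $x(W_x(v)) = x(B_x(w))$, and substituting into $z_L(e) = x(e)/x(W_x(v))$ gives the stated formula.

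I expect the only genuinely delicate point to be the identification $W_x(v) = E_x(v) \cap E[\sim_v e]$ and its counterpart $B_x(w) = E_x(w) \cap E[\sim_w e]$: one must use that $e$ itself belongs to the flat best/worst set to turn the ``best/worst within $E_x$'' description into a clean $\sim$-equivalence class, and then remember that $x$ is supported on $E_x$ so that summing over $E[\sim_v e]$ coincides with summing over $W_x(v)$. Once this bookkeeping is in place, the equality of the two denominators is an immediate consequence of the self-duality identity (S3), and the remainder is a direct substitution.
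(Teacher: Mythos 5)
Your proposal is correct and follows essentially the same route as the paper: both reduce the claim to $x(W_x(v)) = x(B_x(w))$ by identifying $x(E[\sim_v e]) = x(W_x(v))$ and $x(E[\sim_w e]) = x(B_x(w))$ (using $e \in W_x(v)$, $e \in B_x(w)$ via (A4)) and then applying (S3) of Lemma~\ref{lemma:self_dual} with the pair $(x,x)$. The only difference is that you spell out the bookkeeping ($W_x(v) = E_x(v) \cap E[\sim_v e]$ and the support argument) that the paper leaves implicit.
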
 
\begin{proof}

Since $x \in {\bf P}$, 
(S3) of Lemma~\ref{lemma:self_dual}
implies that 
$x(E[\sim_v e]) = x(E[\sim_w e])$. 
Since $e \in W_x(v)$, 
$x(E[\sim_v e]) = x(W_x(v))$. 
Since $e \in B_x(w)$, 
$x(E[\sim_w e]) = x(B_x(w))$. 
Thus, we have 
$x(W_x(v)) = x(B_x(w))$. 
This completes the proof. 
\end{proof} 

Define ${\bf A}$ as the set of vectors $y \in \mathbb{R}_+^L$ such that 
$y(L(v)) = 1$ for every vertex $v \in P \cup Q$. 
Since 
Lemma~\ref{lemma:l_1_equal} implies that
$z_L(L(v)) = 1$ for every 
vertex $v \in P \cup Q$, 
we have $z_L \in {\bf A}$. 
Since 
$G_L$ is bipartite, it is known that 
an extreme point of the 
polytope ${\bf A}$ 
coincides with
a restriction of $\chi_{\mu}$ to $L$ 
for some matching $\mu$ in $L$ that covers $P \cup Q$~\cite{Birkhoff46}
(see also \cite[Corollary~4.19]{ConfortiCZ14}). 
This implies that since $z_L \in {\bf A}$ (i.e., ${\bf A} \neq \emptyset$), 
there exists a matching $\mu_L$ in $L$ that covers 
$P \cup Q$.

Define $K \coloneqq \bigcup_{v \in R}E_x(v)$.  
Define the subgraph $G_K$ of $G$ by 
$G_K \coloneqq (R,K)$. 
Lemma~\ref{lemma:bipartite_r} 
implies that
$G_K$ is well-defined.
Define the vector $z_K \in \mathbb{R}_+^{K}$  
by $z_K(e) \coloneqq x(e)$. 
Then Lemma~\ref{lemma:partition} implies that 
$z_K(K(v)) = 1$ for every vertex $v \in R$. 
Define ${\bf B}$ as the set of 
vectors $y \in \mathbb{R}_+^{K}$ satisfying the 
following conditions. 
\begin{equation*} 
\begin{array}{cl}
& y(K(v)) \le 1 
\ \ \ \mbox{($\forall v \in R$)} \vspace{1mm} \\ 
&
\displaystyle{y(K\langle X \rangle) \le \lfloor |X|/2 \rfloor} 
\ \ \ \mbox{($\forall X \subseteq R$ such that $|X|$ is odd)}.
\end{array}
\end{equation*}
Then we consider the following 
linear problem. 
\begin{equation} \label{eq:matching_polytope} 
\mbox{Maximize} \ \  
y(K) \ \ \ \ 
\mbox{subject to} \ \ \ \ 
y \in {\bf B}. 
\end{equation}
Then since $x \in {\bf P}$, 
$z_K$ is a feasible solution to \eqref{eq:matching_polytope}.
Furthermore, $z_K(K) = |R|/2$.
Thus, since 
an objective value of \eqref{eq:matching_polytope} 
is at most $|R|/2$,  
$z_K$ is an optimal solution to \eqref{eq:matching_polytope}.  
It is known that 
an extreme point of the polytope ${\bf B}$ 
coincides with a restriction of $\chi_{\mu}$ 
to $K$ for some matching $\mu$ in $K$~\cite{Edmonds65} 
(see also \cite[Theorem~4.24]{ConfortiCZ14}). 
Furthermore, 
since ${\bf B}$ is a polytope, 
it is known 
that 
there exists an optimal solution to \eqref{eq:matching_polytope} that 
is 
an extreme point of ${\bf B}$ 
(see, e.g., \cite[Theorem~6.5.7]{GrotschelLS93}).
This implies that since 
$z_K(K) = |R|/2$, 
there exists a matching $\mu_K$ in $K$ 
that covers $R$. 

Define $\overline{\sigma}$ as the set of edges $e \in \sigma$ such that 
$e \subseteq S$. 
Define $\mu \coloneqq \mu_L \cup \mu_K \cup \overline{\sigma}$.  
Then $\mu$ is a matching in $G$. 
Since  
$E_x \subseteq T_x$
for every element $x \in {\bf P}$,
$\mu \subseteq T_x$, i.e., 
$\chi_{\mu}(e) = 0$ for every edge 
$e \in E \setminus T_x$. 
Thus, Lemma~\ref{lemma:polytope} implies that 
what remains is to prove that $\mu$ is strongly stable. 

\begin{lemma} \label{lemma:new_matching} 
$\mu$ is a strongly stable matching in $G$. 
\end{lemma}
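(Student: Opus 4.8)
The plan is to assume, for contradiction, that some edge $e = \{v,w\} \in E \setminus \mu$ blocks $\mu$, with the names chosen so that $e \succ_v \mu(v)$ and $e \succsim_w \mu(w)$, and then to show that $e$ would also block $\sigma$, contradicting the strong stability of $\sigma$. First I would record the value of $\mu$ on each class of the partition $V = V_0 \cup P \cup Q \cup R \cup S$. Since $\mu = \mu_L \cup \mu_K \cup \overline{\sigma}$ and, by Lemmas~\ref{lemma:bipartite_r}, \ref{lemma:bipartite_p}, and~\ref{lemma:bipartite_q}, the incident edge sets reduce so that $\mu(v) \in W_x(v)$ for $v \in P$, $\mu(v) \in B_x(v)$ for $v \in Q$, $\mu(v) \in E_x(v)$ for $v \in R$, $\mu(v) = \sigma(v)$ for $v \in S$, and $\mu(v) = \emptyset$ for $v \in V_0$. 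From this I would extract two facts used repeatedly: (F1) $\mu(u) \succsim_u W_x(u)$ for every $u \in V_1$; and (F2) $\mu(u) \succsim_u \sigma(u)$ for every $u \in V \setminus Q$, which follows from the defining inequalities of $P$, $R$, $S$ together with $\sigma(u) = \emptyset$ when $u \in V_0$.

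The heart of the argument is to rule out the two cases $v \in Q$ and $w \in Q$, which are exactly the classes on which (F2) fails. Here I would invoke \eqref{eq_2:constraint} for the pair $(e,v)$ applied to $x$, which reads $x(E[\sim_v e]) + x(E[\succ_v e]) + x(E[\succ_w e]) \ge 1$. If $v \in Q$, then $e \succ_v \mu(v) = B_x(v)$ forces $x(E[\succsim_v e]) = 0$, since any edge weakly above $e$ is strictly above the best positive-weight edge at $v$; hence the constraint yields $x(E[\succ_w e]) \ge 1$, so $x(E(w)) = 1$ and all the weight at $w$ lies strictly above $e$, giving $W_x(w) \succ_w e$ and then, by (F1), $\mu(w) \succsim_w W_x(w) \succ_w e$, which contradicts $e \succsim_w \mu(w)$. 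If instead $w \in Q$, then $e \succsim_w \mu(w) = B_x(w)$ forces $x(E[\succ_w e]) = 0$, so the same constraint gives $x(E[\succsim_v e]) \ge 1$; hence all the weight at $v$ lies weakly above $e$, so $W_x(v) \succsim_v e$, and then (F1) yields $\mu(v) \succsim_v W_x(v) \succsim_v e$, contradicting $e \succ_v \mu(v)$.

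With $v, w \notin Q$ established, I would finish by showing $e$ blocks $\sigma$. Using (F2) at $v$ gives $e \succ_v \mu(v) \succsim_v \sigma(v)$, so $e \succ_v \sigma(v)$; using (F2) at $w$ gives $e \succsim_w \mu(w) \succsim_w \sigma(w)$, so $e \succsim_w \sigma(w)$. Moreover $e \notin \sigma$, since otherwise $\sigma(v) = e$ would contradict $e \succ_v \sigma(v)$. Thus $e$ weakly blocks $\sigma$ on both $v$ and $w$ and strongly on $v$, i.e.\ $e$ blocks $\sigma$, contradicting the strong stability of $\sigma$. Consequently no edge blocks $\mu$, and $\mu$ is strongly stable.

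I expect the main obstacle to be precisely the $Q$-vertices: this is the only place where the naive comparison $\mu(u) \succsim_u \sigma(u)$ breaks down, and it is exactly where the self-dual tightness of \eqref{eq_2:constraint} supplied by Lemma~\ref{lemma:self_dual} must be used, rather than any property of $\mu$ or $\sigma$ in isolation. The remaining care is bookkeeping: tracking whether each endpoint lies in $V_0$ or $V_1$ so that $W_x(u)$ and (F1) are available (when an endpoint is in $V_0$ the relevant inequality $x(E[\succsim_v e]) \ge 1$ or $x(E[\succ_w e]) \ge 1$ is already contradicted since $x(E(\cdot)) = 0$ there), and noting that both the strict case $e \succ_w B_x(w)$ and the tie case $e \sim_w B_x(w)$ collapse to the same two applications of \eqref{eq_2:constraint}.
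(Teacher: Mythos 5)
Your proof is correct, but it is organized around a different decomposition than the paper's. The paper splits on whether the blocking edge lies in $\sigma$: for $e \in E \setminus (\mu \cup \sigma)$ it runs four subcases keyed to whether each endpoint is in $Q$, in $V_0$, or in $V_1 \setminus Q$, and for $e \in \sigma \setminus \mu$ it invokes (A1)/(A2) of Lemma~\ref{lemma:sim} --- and hence the self-duality statement (S2) of Lemma~\ref{lemma:self_dual} --- to play the $P$-endpoint off against the $Q$-endpoint. You instead run a single contradiction argument: first exclude that either endpoint of a blocking edge lies in $Q$, using only the polytope inequality \eqref{eq_2:constraint} for $x$ (which holds for \emph{every} edge of $E$, so no appeal to $E^+$ or to Lemma~\ref{lemma:sim} is needed) together with your fact (F1); then transfer the blocking pair to $\sigma$ via (F2), the case $e \in \sigma$ being absorbed automatically because $e \succ_v \sigma(v)$ forces $e \notin \sigma$. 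This buys a shorter, more uniform proof with fewer cases; what the paper's longer case analysis buys is an explicit accounting of which ingredient (strong stability of $\sigma$, the inequalities \eqref{eq_1:constraint}--\eqref{eq_2:constraint}, or self-duality via Lemma~\ref{lemma:sim}) carries each configuration, including the $V_0$ endpoints that your argument disposes of implicitly.

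One small correction to your closing commentary: the $Q$-vertex step, as you actually execute it, does not use the ``self-dual tightness'' of \eqref{eq_2:constraint} supplied by Lemma~\ref{lemma:self_dual}; it uses only the inequality itself, which $x$ satisfies by membership in ${\bf P}$. Self-duality enters your argument only through the setup you quote, namely Lemma~\ref{lemma:partition} (which guarantees $E_x(v) \neq \emptyset$, hence that $B_x(v)$ and $W_x(v)$ are well defined, for $v \in V_1$) and Lemmas~\ref{lemma:bipartite_r}, \ref{lemma:bipartite_p}, and~\ref{lemma:bipartite_q} (which give $\mu(v) \in E_x(v)$, $W_x(v)$, $B_x(v)$ on $R$, $P$, $Q$, respectively). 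With that attribution corrected, your proof stands as a valid and somewhat leaner alternative to the paper's.
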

\begin{proof}
Let $e = \{v,w\}$ be an edge in $E \setminus \mu$.
Then we prove that $e$ does not block $\mu$. 
We divide the proof into the following two cases.
\begin{description}
\item[Case~1.]
$e \in E \setminus (\mu \cup \sigma)$.
\item[Case~2.]
$e \in \sigma \setminus \mu$.
\end{description} 

{\bf Case~1.} 
The definition of $\mu$ implies that, for every vertex $u \in e$, 
$\mu(u) = \emptyset$ if and only if 
$\sigma(u) = \emptyset$. 
Thus, if 
$\mu(u) = \emptyset$ for every vertex $u \in e$, 
then 
$\sigma(u) = \emptyset$ for every vertex $u \in e$.
This implies that if $\mu(u) = \emptyset$ for every vertex $u \in e$, then 
$e$ blocks $\sigma$. 
Since $\sigma$ is strongly stable, 
there exists a vertex $u \in e$ such that 
$\mu(u) \neq \emptyset$. 
Without loss of generality, 
we assume that 
$\mu(v) \neq \emptyset$.

Since 
Lemma~\ref{lemma:polytope} implies that 
$\chi_{\sigma} \in {\bf P}$,  
if $\mu(w) = \emptyset$, then
$w \in V_0$.
Since $x \in {\bf P}$,  
\begin{equation*}
x(E[\succsim_v e]) + x(E[\succ_w e]) \ge 1, \ \ \ 
x(E[\succ_v e]) + x(E[\succsim_w e]) \ge 1. 
\end{equation*}
We divide the proof of this case into the following cases. 
\begin{description}
\item[(1-a)]
$v \in Q$ and
$\mu(w) = \emptyset$. 
\item[(1-b)]
$v \in V_1 \setminus Q$ and 
$\mu(w) = \emptyset$. 
\item[(1-c)]
$v \in Q$ and $w \in V_1$. 
\item[(1-d)]
$v \in V_1 \setminus Q$ and $w \in V_1 \setminus Q$. 
\end{description} 

{\bf (1-a)}
Since 
$x(E[\succ_v e]) \ge 1$ follows from $w \in V_0$, 
we have 
$B_x(v) \succ_v e$. 
Thus, since $\mu(v) \in B_x(v)$, 
we have $\mu(v) \succ_v e$.
This implies that $e$ does not block $\mu$. 

{\bf (1-b)}
Since $\sigma$ is strongly stable, 
$\sigma(v) \succ_v e$. 
Thus, since 
$\mu(v) \succsim_v \sigma(v)$, 
$e$ does not block $\mu$. 

{\bf (1-c)}
Assume that 
$e \succ_v B_x(v)$.
Then $x(E[\succsim_v e]) = 0$. 
Thus, $x(E[\succ_w e]) \ge 1$. 
This implies that
$W_x(w) \succ_w e$. 
Thus, 
since $\mu(w) \succsim_w W_x(w)$, 
$e$ does not block $\mu$. 

Assume that 
$B_x(v) \sim_v e$.
Then since 
$x(E[\succ_v e]) = 0$, 
$x(E[\succsim_w e]) \ge 1$. 
Thus, 
$W_x(w) \succsim_w e$. 
Since 
$\mu(v) \in B_x(v)$ and $\mu(w) \succsim_w W_x(w)$, 
$e$ does not block $\mu$. 

Assume that $B_x(v) \succ_v e$.
Since $\mu(v) \in B_x(v)$, 
$\mu(v) \succ_v e$. 
Thus, 
$e$ does not block $\mu$. 

{\bf (1-d)}
In this case, 
$\mu(v) \succsim_v \sigma(v)$ and 
$\mu(w) \succsim_w \sigma(w)$. 
Thus, since 
$\sigma$ is strongly stable, 
$e$ does not block $\mu$. 

{\bf Case~2.} 
In this case, 
we have exactly one of the following 
conditions. 
\begin{description}
\item[(2-a)]
$|e \cap P| = |e \cap Q| = 1$.
\item[(2-b)]
$e \subseteq R$. 
\end{description}

{\bf (2-a)}
Assume that 
$e \cap P = \{v\}$ and $e \cap Q = \{w\}$. 

Assume that 
$e \succ_w B_x(w)$. 
Then since $e \in E^+$, 
(A1) of Lemma~\ref{lemma:sim} implies that 
$W_x(v) \succ_v e$.
Since $\mu(v) \in W_x(v)$, 
$\mu(v) \succ_v e$. 
Thus, 
$e$ does not block $\mu$. 

Assume that 
$B_x(w) \sim_w e$. 
Then since $e \in E^+$, 
(A1) of Lemma~\ref{lemma:sim} implies that 
$W_x(v) \sim_v e$. 
Thus, since 
$\mu(v) \in W_x(v)$ and $\mu(w) \in B_x(w)$,  
$e$ does not block $\mu$. 

Assume that 
$B_x(w) \succ_w e$. 
Since $\mu(w) \in B_x(w)$, 
$\mu(w) \succ_w e$. 
Thus, 
$e$ does not block $\mu$. 

{\bf (2-b)}
In this case, $\mu(v) \sim_v e$ for every 
vertex $v \in e$. 
Thus, $e$ does not block $\mu$. 
\end{proof}

Lemma~\ref{lemma:new_matching} completes 
the proof of Lemma~\ref{lemma:existence}. 

\bibliographystyle{plain}
\bibliography{strong_roommates_lp_bib}

\end{document}